\newtheorem{assumption}{Assumption}             
\newtheorem{theorem}{Theorem}                   
\newtheorem{lemma}{Lemma}                       
\newtheoremstyle{noparens}                      
  {}{}%
  {\itshape}{}%
  {\bfseries}{.}%
  { }%
  {\thmname{#1}\thmnumber{ #2}\mdseries\thmnote{ #3}}
\theoremstyle{noparens}
\newtheorem{theoremNoParens}[theorem]{Theorem} 
\theoremstyle{remark}
\newtheorem{remark}{Remark}                     
\begin{document}

\title{Once and for All: Scheduling Multiple Users Using Statistical CSI under Fixed Wireless Access}


\author{Xin~Guan,
        Zhixing~Chen,
        Yibin~Kang,
        and~Qingjiang~Shi
}

\IEEEtitleabstractindextext{%
\begin{abstract}
Conventional multi-user scheduling schemes are designed based on instantaneous channel state information (CSI), indicating that decisions must be made every transmission time interval (TTI) which lasts at most several milliseconds. Only quite simple approaches can be exploited under this stringent time constraint, resulting in less than satisfactory scheduling performance. In this paper, we investigate the scheduling problem of a fixed wireless access (FWA) network using only statistical CSI. Thanks to their fixed positions, user terminals in FWA can easily provide reliable large-scale CSI lasting tens or even hundreds of TTIs. Inspired by this appealing fact, we propose an \emph{`once-and-for-all'} scheduling approach, i.e. given multiple TTIs sharing identical statistical CSI, only a single high-quality scheduling decision lasting across all TTIs shall be taken  rather than repeatedly making low-quality decisions every TTI. The proposed scheduling design is essentially a mixed-integer non-smooth non-convex stochastic problem with the objective of maximizing the weighted sum rate as well as the number of active users. We firstly replace the indicator functions in the considered problem by well-chosen sigmoid functions to tackle the non-smoothness. Via leveraging deterministic equivalent technique, we then convert the original stochastic problem into an approximated deterministic one, followed by linear relaxation of the integer constraints. However, the converted problem is still highly non-convex due to implicit equation constraints introduced by deterministic equivalent. To address this issue, we employ implicit optimization technique so that the gradient can be derived explicitly, with which we propose an algorithm design based on accelerated Frank-Wolfe method. Numerical results verify the effectiveness of our proposed scheduling scheme over state-of-the-art.
\end{abstract}

\begin{IEEEkeywords}
Multi-user scheduling, fixed wireless access, statistical CSI, deterministic equivalent, implicit optimization, accelerated Frank-Wolfe.
\end{IEEEkeywords}}

\maketitle

\IEEEdisplaynontitleabstractindextext

%
\IEEEpeerreviewmaketitle

\ifCLASSOPTIONcompsoc
\IEEEraisesectionheading{\section{Introduction}\label{sec:introduction}}
\else
\section{Introduction}
\label{sec:introduction}
\fi

%
%
%
%
\IEEEPARstart{M}{ultiple}-input multiple-output (MU-MIMO) network has found its wide applications in cellular networks in the last two decades \cite{spencer2004introduction,clerckx2013mimo,lim2013recent,castaneda2016overview,hu2020iterative}. One of the key aspects in efficiently enabling this prevailing multi-user transmission technique is effective downlink user scheduling \cite{castaneda2016overview}, by which a base station (BS) needs to dynamically select a proper subset of users from all candidate users for further resource allocation. Scheduling schemes in MU-MIMO networks attempt to allocate radio resources in a non-orthogonal fashion. To calculate key performance metrics of a scheduled user, such as bit-error-rate or transmission rate, full instantaneous channel state information (CSI) are required at the BS since those metrics heavily depend on CSI with respect to (w.r.t.) not only this user but also other co-scheduled users. In conventional mobile communication scenarios with high user mobility and rapidly changing wireless environment, instantaneous CSI will be easily outdated in a transmission time interval (TTI) lasting at most several milliseconds, indicating that user scheduling decisions must be repeatedly made every TTI. However, the stringent time constraint for CSI acquisition prohibits the exploitation of high-complexity approaches, while simple time-efficient scheduling schemes, such as greedy search, can only give 
suboptimal solutions with less than satisfactory scheduling performance.

\begin{figure}[htp]
    \centering
    \includegraphics[width=1.0\linewidth]{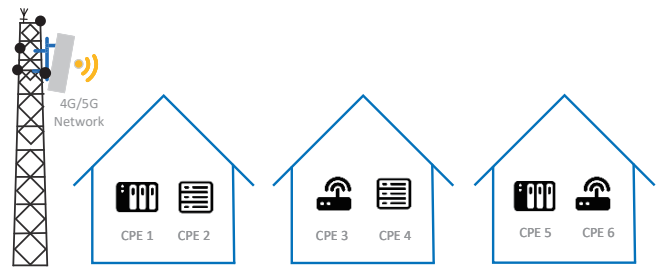}
    \caption{Illustration of Fixed Wireless Network (FWA), where the BS serves various CPEs at home with fixed-location. CPE: customer premise equipment. }
    \label{fig:sys_mdl_fwa}
\end{figure}

\emph{This bottleneck for conventional multi-user scheduling is expected to be alleviated under fixed wireless access (FWA) networks.} FWA, which uses 4G/5G technology to provide wireless last-mile Internet access, can be regarded as a cost-effective replacement for conventional fiber-to-the-home (FTTP) technology in less populated rural areas \cite{huawei20194G}. As illustrated in Fig.1, user terminals in FWA are fixed-position customer premise equipment (CPE) with relatively stationary wireless environment. This leads to an appealing fact that the statistical CSI for FWA users may stay quasi-static for tens or even hundreds of TTIs, although the instantaneous CSI still varies from one TTI to another. The much longer duration of statistical CSI gives us an opportunity to design more superior scheduling schemes with much higher complexity. Therefore, our concern is to develop efficient multi-user scheduling designs based on statistical CSI under the FWA scenario.

\subsection{Prior Work}
Existing research on feasible and efficient MU-MIMO user scheduling algorithm designs mainly focused on the case with instantaneous CSI, which can be roughly divided into two categories: \emph{greedy-based} and \emph{optimization-driven}.
Scheduling decisions in greedy-based designs is constructed by sequentially adding newly selected users when a locally maximum of the optimized utility is founded. The optimized utility can be defined in a direct approach (such as the weighted sum rate) or the indirect approach (such as various metric of spatial compatibility in \cite{castaneda2016overview} ). For the direct approach, one of the fundamental work belongs to the greedy user selection (GUS) based on sum-rate maximization proposed in \cite{dimic2005downlink} using zero-forcing (ZF) precoding and water-filling power allocation. In this scheduling design, the first selected user is the one with the highest rate. Subsequently, every newly selected user should maximize among all candidates the sum-rate rate of all already selected users, until stopping criteria is met. However, the algorithm in \cite{dimic2005downlink} might be computational demanding because calculation of the precoders and power allocation is required each iteration. Therefore, several works have modified the algorithm proposed in\cite{dimic2005downlink} to reduce complexity and improve performance, such as \cite{lau2005optimal,jiang2006greedy,hammarwall2007beamforming,wang2008user,tran2010multiuser,tran2012iterative,nam2014joint}. As for the indirect case, a common approach is the so-called semi-orthogonal user selection (SUS) originally proposed in \cite{yoo2006optimality}, which accomplishes user scheduling by iteratively pre-selecting a group of candidate users based on spatial clustering, where the candidate users at the next iteration will exhibit channel directions fulfilling an $\epsilon$-orthogonality criterion. Several variations of this approaches can also be found in the literature, such as \cite{sun2010eigen,driouch2011efficient,mao2012simplified}. Generally speaking, the greedy-based scheduling designs requires low computational complexity and is easy to implement. However, it does not guarantee neither performance nor convergence to the optimal solution. An alternative is to employ classical optimization theory. In fact, numerous works have included user scheduling for the optimization of weighted sum rate maximization \cite{chan2007capacity}, sum power minimization \cite{moretti2012efficient}, or some other metrics of spatial compatibility \cite{castaneda2014low}. However, all of the existing optimization-driven designs suffer from stringent running time limitations caused by instantaneous CSI, which prohibits their applications in practical networks. As a result, Developing efficient user scheduling algorithms relying on only statistical CSI is of crucial importance.

\subsection{Contributions}
In this paper, we develop a scheduling scheme for an MU-MIMO system with joint user pairing, layer determination and power control. In contrast to preliminary results with full instantaneous CSIT, we focus on a particular scenario where served users are quasi-static such that only stationary long-term statistical CSIT is available. The proposed scheduling design is conducted in a so-called \emph{once-and-for-all} manner, i.e. only a single high-quality scheduling decision
lasting across all TTIs sharing the same statistical CSI shall be taken rather than repeatedly
making low-quality decisions every TTI with various instantaneous CSI. The contributions of this paper are summarized as follows.

\begin{itemize}
    \item We propose an MU-MIMO scheduling design that simultaneously maximizes the weighted sum ergodic rate and the number of active served users by jointly optimizing  user pairing, layer determination and power control under statistical CSI. The original problem is highly non-tractable due to its randomness and non-smoothness. By replacing the ergodic rate term with its deterministic equivalent, this stochastic optimization problem is first approximated as a deterministic one. Moreover, the deterministic problem is further converted into its soft approximation where the non-smooth indicator functions in the problem are approximated by continuously differentiable sigmoid functions. Based on these approximations, we finally derive an deterministic and differentiable optimization problem but with integer optimization variables and implicit equality constraints.

    \item We perform linear relaxation to the integer variables and adopt implicit optimization techniques to deal with the aforesaid problem, followed by developing an efficient algorithm design based on accelerated Frank-Wolfe method. Our converted problem after deterministic equivalent and soft approximation is still  highly nonconvex due to integer variables and implicit constraints. We first perform linear relaxation of the integer variables so that they are converted into continuous variables. Then implicit optimization techniques is utilized such that the gradient for the problem can be computed explicitly. Subsequently we propose an algorithm design based on accelerated Frank-Wolfe method.
    
    \item  We test performance of our developed algorithms through extensive simulations with channels coefficients generated by Quadriga simulators. Our simulation results unveil that, compared with the conventional greedy-based benchmarks, the proposed user scheduling scheme outperforms the overall SE performance in the long term. 
\end{itemize}

\subsection{Notation}
Throughout the paper, we use the following notation. Lowercase and uppercase boldface letters represent column vectors and matrices, respectively. For a matrix $\mathbf{A}$, we denote its Hermitian and transpose by $\mathbf{A}^H$ and $\mathbf{A}^T$, respectively. We use $\mathbb{C}^{m \times n}$ to denote an $m$ by $n$ dimensional complex space. The notations $\mathbb{E} \{\cdot\}$ and $\text{Tr}(\cdot)$ represent expectation and trace operator.

\subsection{Paper Outline}
The rest of the paper is organized as follows. In Section II,
we introduce the system model and formulate the optimization problem. In Section III, we propose approximations of the original problem. In Section IV, we develop algorithms for the approximated problem. In Section V, we provide numerical simulations. We conclude in Section VI.

\section{System model And Problem formulation}
In this section, we describe the system model of the considered MU-MIMO system, including the signal model and the channel model, and then formulate our optimization problem.

\subsection{Signal Model}
Consider the downlink scheduling of an MU-MIMO system  with an $N_t$-antenna base station (BS) serving $I$ users, each equipped with $N_r$ antennas. The number of schedulable resource blocks groups (RBGs) is $N$. The number of transmitted data streams for each user on an RBG is not allowed to exceed a given threshold $L$. Denote $\alpha_{inl} \in \{0, 1\}$ as a binary variable indicating whether or not the $l$-th data stream for user $i$ transmitted from the BS is activated on RBG $n$, i.e.

\begin{equation}
    \alpha_{inl} = \left\{ 
    \begin{aligned}
        & 1, \ \text{if stream } l \text{ for user } i \text{ activated on RBG } n, \\
        & 0, \ \text{otherwise} .
    \end{aligned}
    \right.
\end{equation}
Without loss of generality, we assume that the precoding group size is the same as the RBG size. Specifically, let $\mathbf{g}_{inl} \in \mathbb{C}^{N_t \times 1}$ denote the precoding vector on RBG $n$ with respect to (w.r.t.) stream $l$ for user $i$ and  $s_{inl} \in \mathbb{C}$ denote the corresponding transmitted data symbol. It is assumed that $s_{inl}$ conforms to a zero-mean unit-variance circularly symmetric Gaussian distribution. Meanwhile, we have $\mathbb{E} \{ s_{inl} s_{jnq}^* \} = 0 $ for any $(j,q) \neq (i,l)$ with given $n$. Thus, the received signal of user $i$ on RBG $n$ can be expressed as
\begin{equation}
\begin{aligned}
    \mathbf{y}_{in} 
    & = \mathbf{H}_{in}  \alpha_{inl} \sqrt{p_{inl}} \mathbf{g}_{inl} s_{inl} \\
    & + \mathbf{H}_{in} \sum_{\left( j, q \right) \neq \left( i, l \right)} \alpha_{jnq} \sqrt{p_{jnq}} \mathbf{g}_{jnq} s_{jnq} + \mathbf{n}_{in}  
\end{aligned} \label{eq:sig_mdl}
\end{equation}
where $\mathbf{H}_{in} \in \mathbb{C}^{N_r \times N_t}$ is the random channel matrix between the BS and user $i$ on RBG $n$, $\mathbf{n}_{in} \in \mathbb{C}^{N_r \times 1} $ represents the corresponding white Gaussian noise vector with the covariance matrix given as $\sigma_{in}^2 \mathbf{I}$, $p_{inl}$ is the transmit power of stream $l$ for user $i$ on RBG $n$.

At the receiver side, the singular value decomposition (SVD) for the channel matrices are conducted such that $\mathbf{H}_{in} = \mathbf{U}_{in} \mathbf{\Lambda}_{in}^{1/2} \mathbf{V}_{in}^{H}$. The leftmost $L$ columns of $\mathbf{U}_{in}$, denoted as $\tilde{\mathbf{U}}_{in} \in \mathbb{C}^{N_r \times L} $, are exploited as the combiner for the received signal $\mathbf{y}_{in}$. Therefore, the estimated signal vector after combining can be written as
\begin{equation}
\begin{aligned}
    \tilde{\mathbf{y}}_{in} 
    & = \tilde{\mathbf{U}}_{in}^H \mathbf{y}_{in} \\
    & = \tilde{\mathbf{H}}_{in}  \alpha_{inl} \sqrt{p_{inl}} \mathbf{g}_{inl} s_{inl} \\
    & + \tilde{\mathbf{H}}_{in} \sum_{\left( j, q \right) \neq \left( i, l \right)} \alpha_{jnq} \sqrt{p_{jnq}} \mathbf{g}_{jnq} s_{jnq} + \tilde{\mathbf{n}}_{in} 
\end{aligned} \label{eq:sig_mdl_eff}
\end{equation}
where $\tilde{\mathbf{H}}_{in} = \tilde{\mathbf{U}}_{in}^H \mathbf{H}_{in} = \tilde{\mathbf{\Lambda}}_{in}^{1/2} \tilde{\mathbf{V}}_{in}^H$ with $\tilde{\mathbf{\Lambda}}_{in}^{1/2} 
\in \mathbb{C}^{L \times L} $ the diagonal matrix composed of the first $L$ diagonal elements of $\mathbf{\Lambda}_{in}^{1/2}$ and $\tilde{\mathbf{V}}_{in} \in \mathbb{C}^{N_t \times L} $ the matrix composed of the leftmost $L$ columns of $\mathbf{V}_{in}$, $\tilde{\mathbf{n}}_{in} = \tilde{\mathbf{U}}_{in}^H \mathbf{n}_{in} $ is the combined noise vector. Consequently, the $l$-th element of $\tilde{\mathbf{y}}_{in}$ is expressed as
\begin{equation}
\begin{aligned}
    \tilde{y}_{inl} 
    & = \tilde{\mathbf{h}}_{inl}^H  \alpha_{inl} \sqrt{p_{inl}} \mathbf{g}_{inl} s_{inl} \\
    & + \tilde{\mathbf{h}}_{inl}^H \sum_{\left( j, q \right) \neq \left( i, l \right)} \alpha_{jnq} \sqrt{p_{jnq}} \mathbf{g}_{jnq} s_{jnq} + \tilde{n}_{inl} 
\end{aligned} \label{eq:sig_mdl_eff_sigl_str}
\end{equation}
where $\tilde{\mathbf{h}}_{inl}^H$ is the $l$-th row of $\tilde{\mathbf{H}}_{in}$. 

It is worth noting that the conventional zero-forcing (ZF) \cite{spencer2004zero} scheme cannot be directly applied to get the multi-user precoders $\mathbf{g}_{inl}$ in \eqref{eq:sig_mdl_eff_sigl_str}. The underlying reason is that the coupling of the binary variables $\alpha_{inl}$ and $\tilde{\mathbf{h}}_{inl}^H$ in \eqref{eq:sig_mdl_eff_sigl_str} makes it unattainable to implement the pseudo inverse step of ZF precoding. As a countermeasure, we exploit the regularized zero-forcing (RZF) with a very small regularization factor to act as an approximation of the ZF scheme. Specifically, let $\delta > 0$ denote the regularization factor, the precoding vector $\mathbf{g}_{inl}$ should thus take the following form:
\begin{equation}
    \mathbf{g}_{inl} =\xi_{inl} {\mathbf{W}}_{n}^{-1}  \tilde{\mathbf{h}}_{inl} \label{eq:rzf}
\end{equation}
where ${\mathbf{W}}_{n} \in \mathbb{C}^{N_t \times N_t}$ is defined as 
\begin{equation}
    {\mathbf{W}}_{n} \triangleq \sum_{j=1}^{I} \sum_{q=1}^{L} \alpha_{jnq} \tilde{\mathbf{h}}_{jnq} \tilde{\mathbf{h}}_{jnq}^{H}+\delta \mathbf{I}
\end{equation}
and $\xi_{inl}$ is the power normalization scalar satisfying
\begin{equation}
    \xi_{inl}^2 = \frac{1}{  \tilde{\mathbf{h}}_{inl}^H {\mathbf{W}}_{n}^{-2} \tilde{\mathbf{h}}_{inl}} . \label{eq:rzf_coff}
\end{equation}
Combining equations \eqref{eq:sig_mdl_eff_sigl_str} and \eqref{eq:rzf}, we then derive the signal-to-interference-plus-noise ratio (SINR) as
\begin{equation}
\gamma_{inl}=\frac{ \alpha_{inl} p_{inl} \xi_{inl}^2 \left|\tilde{\mathbf{h}}_{inl}^{H} {\mathbf{W}}_{n}^{-1} \tilde{\mathbf{h}}_{inl} \right|^{2}}{\sum\limits_{(j, q) \neq(i, l)} \alpha_{jnq} p_{jnq} \xi_{jnq}^2 \left|\tilde{\mathbf{h}}_{inl}^{H} {\mathbf{W}}_{n}^{-1} \tilde{\mathbf{h}}_{jnq} \right|^{2}+\sigma_{i n }^{2}} .
\label{eq:sinr_rzf}
\end{equation}
Subsequently, the ergodic rate of user $i$ is defined as
\begin{equation}
    R_{i} = \mathbb{E}_{\tilde{\mathbf{h}}_{inl}} \left\{ \sum_{n,l} \log \left( 1+\gamma_{inl} \right) \right\}. \label{eq:rate}
\end{equation}
 
\subsection{Channel Model}

We make the following two assumptions about the random channel model used in this paper:
\begin{itemize}
    \item[-] Although the instantaneous channel coefficients randomly change with time, the corresponding channel correlation matrices are assumed to be slowly varying and thus are supposed to be constant and perfectly known at the transmitter during the channel coherence time. 
    \item[-] For a given stream $l$ w.r.t. a given user $i$, the corresponding channel correlation matrix is identical at all RBGs, which is denoted by ${\mathbf{R}}_{il}$.
\end{itemize}
 
Based on the above ssumptions, we exploit the following channel model throughout our work (Note that the similar model has been widely adopted in the literature \cite{couillet2011random}):

\begin{equation}
    \tilde{\mathbf{h}}_{inl} = \sqrt{N_t} {\mathbf{R}}_{il}^{1/2} \mathbf{x}_{inl} \label{eq:chanl_mdl}
\end{equation}
where $\mathbf{x}_{inl}$ conform to a circularly
symmetric complex Gaussian distribution with zero mean and covariance matrix given as $(1/Nt) \cdot \mathbf{I}$. 

\begin{remark}
In a practical communication system, the channel correlation matrix ${\mathbf{R}}_{il}$ can be well approximated by the corresponding sampling correlation matrix computed from observed channel vector samples. Specifically, suppose we have $T$ samples of channel measurements $\{\mathbf{H}_{in}^{(t)}\}_{t=1}^T$, where $T$ is large enough. The channel vector w.r.t. stream $l$ for user $i$ of sample $t$ can be equivalently written as 
\begin{equation}
    \left(\tilde{\mathbf{h}}_{inl}^{(t)}\right)^H = \left(\lambda_{inl}^{(t)}\right)^{1/2}\left(\mathbf{v}_{inl}^{(t)}\right)^H
\end{equation}
where $(\tilde{\mathbf{h}}_{inl}^{(t)})^H$ and $(\mathbf{v}_{inl}^{t})^H$ are the $l$-th singular value and right singular vector of the original channel matrix $\mathbf{H}_{in}^{(t)}$, respectively. The channel correlation matrix ${\mathbf{R}}_{il}$ can then be approximately computed by 
\begin{equation}
    \begin{aligned}
        {\mathbf{R}}_{il} & \approx \frac{1}{N  \cdot T} \sum_{n,t} \tilde{\mathbf{h}}_{inl}^{(t)} \left( \tilde{\mathbf{h}}_{inl}^{(t)}\right)^H \\
        & \approx \frac{1}{N  \cdot T} \sum_{n,t} \lambda_{inl}^{(t)} \mathbf{v}_{inl}^{(t)} \left( \mathbf{v}_{inl}^{(t)}\right)^H .
    \end{aligned}
\end{equation}
\end{remark}

\subsection{Problem Formulation}

Our main objective in this paper is to maximize the weighted sum rate of the MU-MIMO system, while ensuring that the number of active users being as large as possible. To this end, our optimization problem is formulated as
\begin{subequations}
\begin{align}
   \mathop{\text{max}}_{\{\alpha_{inl}, p_{inl}\}} \ & {\sum_{i=1}^I  \mu_i R_i+w\sum_{i=1}^I \mathbb{I}\left(\sum_{n,l}\alpha_{inl}\right)} \label{obj:orig} 
  \\
 \text{s. t.} \quad \ \ & \sum_{i, l} \alpha_{inl} \leq L_{\text{max}}, \quad \forall \ n,  \label{constr:str}
  \\
  & \sum_{i,n,l} p_{inl} \leq P_{\text{max}},\label{constr:pow} 
  \\
   & \alpha_{i n l} \in\{0,1\}, \quad \forall \ i,n,l, \label{constr:int}
  \\
  & p_{inl} \geq 0 , \quad \forall \ i,n,l, \label{constr:powge0}
  \\
& {R_i \geq \mathbb{I}\left(\sum_{n,l}\alpha_{inl}\right)R^{\text{min}}_{i}, \quad \forall i},
\label{constr:rate}
 \end{align}  \label{prblm:orig}
\end{subequations}
where 
\begin{equation}
    \mathbb{I}(x)=
\left\{
             \begin{array}{lr}
             1,\quad \text{if} \ x\neq0 ,\\
             0,\quad \text{if} \ x = 0
             
             \end{array}
\right.
\end{equation}
is the indicator function, $w$ is the penalty coefficient, $\mu_i$ is the weighting coefficient adjusting user $i$'s priority. Constraints \eqref{constr:str} indicate that the total number of streams at each RBG is not allowed to exceed the given threshold $L_{\text{max}}$. Constraint \eqref{constr:pow} restricts the total transmit power; Constraints \eqref{constr:int} and \eqref{constr:powge0} are the integer constraints for $\{\alpha_{inl}\}$ and the non-negative constraints for $\{p_{inl}\}$, respectively. Constraints \eqref{constr:rate} implies that once a user is activated, its corresponding data rate should be no less than the minimum rate requirement $R^{\text{min}}_{i}$.

\section{Problem Approximation}

Problem \eqref{prblm:orig} is highly non-tractable mainly because: 1) the non-smoothness and non-convexity caused by the indicator functions in the objective function \eqref{obj:orig} and the constraints \eqref{constr:rate}; 2) the lack of closed form for the rate expression $R_i$ due to the randomness of the channel coefficient $\tilde{\mathbf{h}}_{inl}$. In this section, we propose efficient approximations for the original problem \eqref{prblm:orig} to derive a more solvable approximated optimization problem.

\subsection{Soft Approximation Using Sigmoid Functions}
In this section, we focus on proposing a soft approximation version of the original problem \eqref{prblm:orig} to attack the non-smoothness caused by the indicator functions. First, constraints \eqref{constr:rate} indicates that a user is activated if and only if its rate exceeds the minimum rate requirement, i.e. 
\begin{equation}
    R_i - R_i^{\text{min}} \geq 0, \quad \forall i \ \text{with}  \ \sum_{n,l}\alpha_{inl} > 0. 
\end{equation}
Moreover, recall that $\mathbb{I}(x)$ can be well approximated by the sigmoid function $1/(1+e^{-\theta x})$ where $\theta$ is the predefined hyper-parameter, as illustrated in Fig.  \ref{fig:indi_fun_vs_sigmoid_fun}. Based on these observations, we replace the second term in \eqref{obj:orig} with its soft approximation based on the sigmoid function and formulate the following optimization problem:
\begin{subequations}
\begin{align}
   \mathop{\text{max}}_{\{\alpha_{inl}, p_{inl}\}} \ & \sum_{i=1}^I  \mu_i {R}_i+w\sum_{i=1}^I \frac{1}{1+e^{-\theta_i ({R}_i -R_i^{\min})}}  
  \\
 \text{s. t.} \quad \ \ & \eqref{constr:str} - \eqref{constr:powge0}.
 \end{align} \label{prblm:soft_appro}
\end{subequations}

\begin{figure}[htp]
    \centering
    \includegraphics[width=1.0\linewidth]{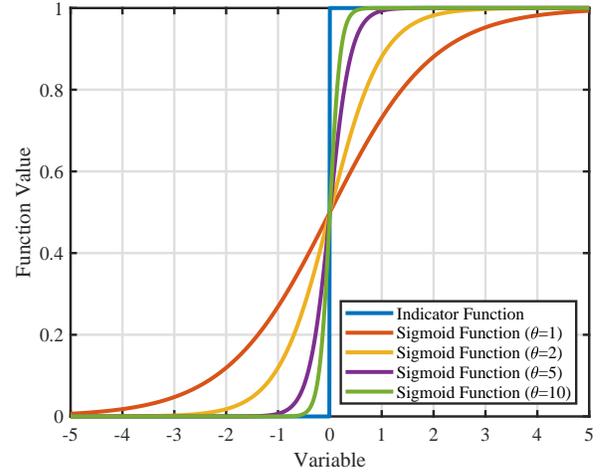}
    \caption{Comparison among the indicator function and sigmoid functions with varying hyper-parameter $\theta$ .  }
    \label{fig:indi_fun_vs_sigmoid_fun}
\end{figure}

Comparing problem \eqref{prblm:soft_appro} with the original problem \eqref{prblm:orig}, we observe that both problems imply the same physical meaning: If a user is not activated, its corresponding regularization term is (nearly) equal to zero; Otherwise, it is (nearly) equal to one and the corresponding rate exceeds the minimum rate requirement. In other words, problem \eqref{prblm:soft_appro} is a nice smooth version of the original non-tractable optimization problem \eqref{prblm:orig}. Therefore, in the following contexts we turn to solve the approximated problem \eqref{prblm:soft_appro}.

\subsection{Deterministic Equivalent of the SINR Expression}

Besides the non-smoothness due to the indicator functions, problem \eqref{prblm:orig} is still highly non-tractable due to the lack of closed form for $R_i$ as shown in \eqref{eq:rate}. This is because the randomness of the SINR expression $\gamma_{inl}$, or more specifically, the randomness of $\tilde{\mathbf{h}}_{inl}$. The key idea to address this issue is to replace the random term $\gamma_{inl}$ by its \emph{deterministic equivalent} $\bar{\gamma}_{inl}$, which is considered to be a deterministic expression, such that $\gamma_{inl} - \bar{\gamma}_{inl} \mathop{\longrightarrow}\limits_{N_t \rightarrow \infty} 0$\footnote{Although this asymptotic expression is derived in the limit regime, it usually shows nice approximation accuracy in finite size cases \cite{couillet2011random}. }. To accomplish this, we first review the highly useful theorem as detailed below. 

\begin{theoremNoParens} [{\cite[Th. 1]{wagner2012large}}]
   Let $\mathbf{S} \in \mathbb{C}^{K \times K}$ and $\mathbf{R}_d \in \mathbb{C}^{K \times K}$ be deterministic Hermitian non-negative definite matrices. Define  $\mathbf{B} \triangleq \sum_{d=1}^D \tilde{\mathbf{h}}_d \tilde{\mathbf{h}}_d^H  + \mathbf{S}$, where $\tilde{\mathbf{h}}_d^H = \mathbf{x}_d^H {\mathbf{R}}^{1/2}_d $ with $\mathbf{x}_d \in \mathbb{C}^{K}$ conforming to complex Gaussian distribution with zero mean and covariance matrix $(1/K) \cdot \mathbf{I}$.  Assume that the matrices $\mathbf{Q} \in \mathbb{C}^{K \times K}$ and $\mathbf{R}_d$ have uniformly bounded spectral norm. Define 
 \begin{equation}
    m(z)\triangleq\frac{1}{K}\mathrm{Tr}\left(
    \mathbf{Q}(\mathbf{B}-z\mathbf{I})^{-1}\right).
\end{equation}
Then, for any $z \in \mathbb{C} \setminus \mathbb{R}^{+}$, as $D$ and $K$ grow large with the constant ratio $c \triangleq K / D$ such that $0 < \lim_{K \to \infty} c < \infty$, we have that 
\begin{equation}
    m(z)-\bar{m}(z)
    \mathop{\longrightarrow}\limits_{K\rightarrow\infty} 0, \quad \bar{m}(z)=\frac{1}{K}\mathrm{Tr} \left( \mathbf{Q} \mathbf{T} \right)
\end{equation} 
almost surely, where
\begin{equation}
    \mathbf{T} =  \left(
    \frac{1}{K}\sum^D_{s=1}\frac{\mathbf{R}_s}{1+{e}_{s}(z)}
    +\mathbf{S}-z\mathbf{I}\right)^{-1} 
\end{equation}%
with the functions ${e}_{1}(z),{e}_{2}(z),...{e}_{D}(z)$ defined as the unique solution of the following equations:
\begin{equation}
   {e}_{d}(z)=\frac{1}{K}\mathrm{Tr}
    \left(\mathbf{R}_{d} \left(
    \frac{1}{K}\sum^D_{s=1}\frac{\mathbf{R}_s}{1+{e}_{s}(z)}
    +\mathbf{S}-z\mathbf{I}\right)^{-1} \right) \label{eq:stieltjes_transform}
\end{equation}%
with initial values ${e}_{d}^{(0)}(z) = -1/z$ for all $d$. \label{theorem:de}
\end{theoremNoParens}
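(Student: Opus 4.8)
This is the separable-covariance deterministic-equivalent result of \cite{wagner2012large}, and the plan follows the classical Bai--Silverstein / Guionnet--Zeitouni machinery in three stages. First I would establish \emph{concentration} of $m(z)$ around its expectation. Writing $\mathbf{B}_d \triangleq \mathbf{B} - \tilde{\mathbf{h}}_d\tilde{\mathbf{h}}_d^H$ and filtering on the $\sigma$-algebras generated by $\mathbf{x}_1,\dots,\mathbf{x}_d$, the quantity $K\big(m(z)-\mathbb{E}\{m(z)\}\big)$ becomes a sum of martingale differences, each uniformly bounded by virtue of the rank-one resolvent-perturbation bound $\big|\mathrm{Tr}\big(\mathbf{Q}(\mathbf{B}-z\mathbf{I})^{-1}\big) - \mathrm{Tr}\big(\mathbf{Q}(\mathbf{B}_d-z\mathbf{I})^{-1}\big)\big| \le \|\mathbf{Q}\|/\mathrm{dist}(z,\mathbb{R}^{+})$. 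The Burkholder inequality then gives $\mathbb{E}\big|m(z)-\mathbb{E}\{m(z)\}\big|^{2p} = O(K^{-p})$, which Borel--Cantelli upgrades to almost-sure convergence $m(z)-\mathbb{E}\{m(z)\}\to 0$.

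Second, I would identify $\tfrac1K\mathrm{Tr}\big(\mathbf{Q}(\mathbf{B}-z\mathbf{I})^{-1}\big)$, up to a vanishing error, with $\tfrac1K\mathrm{Tr}(\mathbf{Q}\mathbf{T})$. Starting from the resolvent identity
\[
\mathbf{T} - (\mathbf{B}-z\mathbf{I})^{-1} = \mathbf{T}\Big(\textstyle\sum_{d=1}^{D}\tilde{\mathbf{h}}_d\tilde{\mathbf{h}}_d^H - \frac1K\sum_{s=1}^{D}\frac{\mathbf{R}_s}{1+e_s(z)}\Big)(\mathbf{B}-z\mathbf{I})^{-1},
\]
I would apply $\tfrac1K\mathrm{Tr}(\mathbf{Q}\,\cdot\,)$ and, term by term, invoke the Sherman--Morrison identity $(\mathbf{B}-z\mathbf{I})^{-1}\tilde{\mathbf{h}}_d = (\mathbf{B}_d-z\mathbf{I})^{-1}\tilde{\mathbf{h}}_d / \big(1+\tilde{\mathbf{h}}_d^H(\mathbf{B}_d-z\mathbf{I})^{-1}\tilde{\mathbf{h}}_d\big)$ to decouple $\tilde{\mathbf{h}}_d$ from the resolvent. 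The trace lemma (concentration of quadratic forms: $\tilde{\mathbf{h}}_d^H\mathbf{A}\tilde{\mathbf{h}}_d - \tfrac1K\mathrm{Tr}(\mathbf{R}_d\mathbf{A}) \to 0$ for $\mathbf{A}$ independent of $\mathbf{x}_d$ with bounded norm) then replaces $1+\tilde{\mathbf{h}}_d^H(\mathbf{B}_d-z\mathbf{I})^{-1}\tilde{\mathbf{h}}_d$ by $1+\tfrac1K\mathrm{Tr}\big(\mathbf{R}_d(\mathbf{B}-z\mathbf{I})^{-1}\big)$, which the self-consistent system \eqref{eq:stieltjes_transform} forces to be close to $1+e_d(z)$; collecting the error terms and using the uniform bounds on $\|\mathbf{R}_d\|$, $\|\mathbf{T}\|$ and $\|(\mathbf{B}-z\mathbf{I})^{-1}\|$ off $\mathbb{R}^{+}$ gives $\tfrac1K\mathrm{Tr}\big(\mathbf{Q}(\mathbf{B}-z\mathbf{I})^{-1}\big) - \tfrac1K\mathrm{Tr}(\mathbf{Q}\mathbf{T}) \to 0$.

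Third, I would check that \eqref{eq:stieltjes_transform} has a \emph{unique} solution in the class of Stieltjes transforms of nonnegative finite measures: for real $z<0$ the defining map is a standard interference function (positive, monotone, scalable), so the iteration started at $e_d^{(0)}(z)=-1/z$ converges to the unique fixed point, and analyticity together with the identity theorem extends uniqueness to all $z\in\mathbb{C}\setminus\mathbb{R}^{+}$, while a Montel/normal-families argument confirms each $e_d$ is a genuine Stieltjes transform. Combining the three stages yields $m(z)-\bar m(z)\to 0$ almost surely, and $\bar m(z)=\tfrac1K\mathrm{Tr}(\mathbf{Q}\mathbf{T})$.

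The step I expect to be the main obstacle is the second one: controlling the accumulation of $D=\Theta(K)$ fluctuation terms in the resolvent expansion so that their \emph{sum} still vanishes rather than growing like $\sqrt{K}$. This hinges on showing that the relevant cross-terms form martingale-difference sequences with summable conditional variances -- essentially a second, nested application of concentration -- and on keeping every resolvent norm uniformly bounded away from the spectrum, which is exactly where the hypotheses $z\in\mathbb{C}\setminus\mathbb{R}^{+}$ and the uniform boundedness of $\|\mathbf{R}_d\|$ and $\|\mathbf{Q}\|$ are used.
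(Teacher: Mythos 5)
The paper does not prove this statement at all: it is imported verbatim as \cite[Th.~1]{wagner2012large} and used as a black box (the only proofs in the appendices are for Theorems~\ref{theorem:DE_SINR} and~\ref{theorem:de_sinr_zf}, which \emph{apply} this result). So there is no in-paper argument to compare against; your sketch can only be judged against the cited reference, and on that score it is a faithful outline of the standard deterministic-equivalent proof: (i) almost-sure concentration of $m(z)$ via the rank-one resolvent perturbation bound, martingale differences, Burkholder and Borel--Cantelli; (ii) matching $m(z)$ with $\frac{1}{K}\mathrm{Tr}(\mathbf{Q}\mathbf{T})$ through the resolvent identity, the matrix inversion (Sherman--Morrison) lemma and the trace lemma, i.e.\ exactly the toolkit the present paper lists as Lemmas~\ref{lemma:matrix_inversion}--\ref{lemma:rand_one_perturbation}; and (iii) existence/uniqueness of the fixed-point system $\{e_d(z)\}$ for $z<0$ by monotonicity/interference-function arguments, extended to $\mathbb{C}\setminus\mathbb{R}^{+}$ by analyticity within the class of Stieltjes transforms. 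You also correctly flag the genuinely delicate point, namely that the $D=\Theta(K)$ error terms in stage (ii) must be shown to vanish in aggregate (in \cite{wagner2012large} this is handled through an intermediate deterministic equivalent with iteratively defined $e_d^{(t)}$ and per-term $O(1/K)$ variance control, which is also where the initial values $e_d^{(0)}(z)=-1/z$ in the statement come from). No gap in the plan as a sketch; just be aware that filling in stage (ii) rigorously is the bulk of the work in the original reference, and that the paper you are reading never undertakes it.
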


In order to exploit the aforementioned theorem, several assumptions about the channel vectors and channel correlation matrices are further need (Similar assumptions have been widely seen in the existing literature \cite{wagner2012large}):

\begin{assumption}
All correlation matrices $\mathbf{R}_{il}$ have uniformly bounded spectral norm on $N_t$ and $I$, i.e. \label{assumption:norm1}
\begin{equation}
    \mathop{\lim\sup}_{N_t,I \rightarrow \infty} \sup\limits_{ \substack{1 \leq i \leq I \\ 1 \leq l \leq L}} \| \mathbf{R}_{il} \| < \infty .
\end{equation}
\end{assumption}
\begin{assumption}
The random channel vectors $\tilde{\mathbf{h}}_{inl}$ satisfy
\begin{equation}
    \mathop{\lim\sup}_{N_t,I \rightarrow \infty} \left \| \frac{1}{N_t} \sum_{i,l} \tilde{\mathbf{h}}_{inl} \tilde{\mathbf{h}}_{inl}^H \right \| < \infty
\end{equation}
with probability one. \label{assumption:norm2}
\end{assumption}

With the help of the two aforesaid assumptions and Theorem \ref{theorem:de}, the derivation for the deterministic equivalent of $\gamma_{inl}$ can be summarized as the following theorem.

\begin{theorem}
    Let assumptions \ref{assumption:norm1} and \ref{assumption:norm2} hold true. Then $\gamma_{inl} - \bar{\gamma}_{inl} \mathop{\longrightarrow}\limits_{N_t \rightarrow 0} 0$ almost surely with $\bar{\gamma}_{inl}$ defined as 
    \begin{equation}
        \bar{\gamma}_{inl} = \frac{ N_{t}\alpha_{inl} p_{inl}
    {e}_{inl}^2(1+{e}_{inl})^2}{{e}'_{inl} b_{inl}
    +{e}'_{inl}\sigma_{in}^{2}(1+{e}_{inl})^2}, \label{eq:de_sinr}
    \end{equation}
    where
\begin{align}
    e_{inl} & = \frac{1}{N_t} \mathrm{Tr}\left({\mathbf{R}}_{il}\mathbf{T}_{n} \right), \forall i,n,l, \label{eq:e} \\
    e'_{inl} & = \frac{1}{N_t} \mathrm{Tr}\left({\mathbf{R}}_{il}\mathbf{T}'_{n} \right), \forall i,n,l,  \label{eq:e'} \\
    e'_{jnq,inl} & = \frac{1}{N_t} \mathrm{Tr}\left({\mathbf{R}}_{il}\mathbf{T}'_{inl} \right), \forall i,n,l, \label{eq:e_il'} \\
    b_{inl} & = \sum\limits_{(j, q) \neq(i, l)} \frac{ \alpha_{jnq}p_{jnq}{e}'_{jnq,inl}}{{e}'_{jnq}}, \forall i,n,l.
\end{align}
with 
\begin{align}
\mathbf{T}_{n} & =\left( \frac{1}{N_t}\sum_{j=1}^I\sum_{q=1}^L\frac{\alpha_{jnq}{\mathbf{R}}_{jq}}{1+e_{jnq}}+\delta \mathbf{I}  \right)^{-1} \label{eq:T} \\
\mathbf{T}'_{n} & =\mathbf{T}_{n}\left(\frac{1}{N_t}\sum_{j=1}^I\sum_{q=1}^L\frac{\alpha_{jnq}e'_{jnq}{\mathbf{R}}_{jq}}{(1+e_{jnq})^2}+\mathbf{I} \right)\mathbf{T}_{n}  \label{eq:T'} \\
\mathbf{T}'_{inl} & =\mathbf{T}_{n}\left(\frac{1}{N_t}\sum_{j=1}^I\sum_{q=1}^L\frac{\alpha_{jnq}e'_{jnq,inl}{\mathbf{R}}_{jq}}{(1+e_{jnq})^2}+\mathbf{I} \right)\mathbf{T}_{n} . \label{eq:T_il'} 
\end{align} \label{theorem:DE_SINR}
\end{theorem}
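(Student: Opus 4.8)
The plan is to follow the standard random-matrix program for deterministic equivalents of RZF SINRs (cf.\ \cite{wagner2012large,couillet2011random}), specialized to the precoder \eqref{eq:rzf}--\eqref{eq:rzf_coff} and the channel model \eqref{eq:chanl_mdl}. For an index subset $\mathcal{K}$, write $\mathbf{W}_{n,[\mathcal{K}]}$ for $\mathbf{W}_n$ with the rank-one terms indexed by $\mathcal{K}$ removed. The first step is to apply the Sherman--Morrison identity to $\mathbf{W}_n^{-1}$ in \eqref{eq:sinr_rzf} so as to decouple $\tilde{\mathbf{h}}_{inl}$ from $\mathbf{W}_n$ in the signal term, and both $\tilde{\mathbf{h}}_{inl}$ and $\tilde{\mathbf{h}}_{jnq}$ from $\mathbf{W}_n$ in each interference term. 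A convenient feature is that the power-normalization factor $\xi_{inl}^2$ of \eqref{eq:rzf_coff} exactly cancels the $(1+\alpha_{inl}\tilde{\mathbf{h}}_{inl}^H\mathbf{W}_{n,[inl]}^{-1}\tilde{\mathbf{h}}_{inl})$ factors produced by Sherman--Morrison in the numerator, reducing it to $\alpha_{inl}p_{inl}(\tilde{\mathbf{h}}_{inl}^H\mathbf{W}_{n,[inl]}^{-1}\tilde{\mathbf{h}}_{inl})^2/(\tilde{\mathbf{h}}_{inl}^H\mathbf{W}_{n,[inl]}^{-2}\tilde{\mathbf{h}}_{inl})$, and likewise $\xi_{jnq}^2$ clears the Sherman--Morrison factors attached to $\tilde{\mathbf{h}}_{jnq}$ in the interference.

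The second step invokes the trace (quadratic-form) lemma, whose hypotheses are furnished by Assumptions \ref{assumption:norm1}--\ref{assumption:norm2}: since $\tilde{\mathbf{h}}_{inl}=\sqrt{N_t}\mathbf{R}_{il}^{1/2}\mathbf{x}_{inl}$ is independent of every deleted resolvent, the diagonal forms concentrate, $\frac{1}{N_t}\tilde{\mathbf{h}}_{inl}^H\mathbf{A}\tilde{\mathbf{h}}_{inl}-\frac{1}{N_t}\mathrm{Tr}(\mathbf{R}_{il}\mathbf{A})\to 0$ for $\mathbf{A}\in\{\mathbf{W}_{n,[inl]}^{-1},\mathbf{W}_{n,[inl]}^{-2}\}$; for the interference one conditions on $\tilde{\mathbf{h}}_{inl}$, averages over $\tilde{\mathbf{h}}_{jnq}$ using $\mathbb{E}[\tilde{\mathbf{h}}_{jnq}\tilde{\mathbf{h}}_{jnq}^H]=\mathbf{R}_{jq}$, and uses a variance/martingale argument so that the \emph{sum} over $(j,q)\neq(i,l)$ concentrates around a sum of quadratic forms $\tilde{\mathbf{h}}_{inl}^H\mathbf{W}_{n,[inl]}^{-1}\mathbf{R}_{jq}\mathbf{W}_{n,[inl]}^{-1}\tilde{\mathbf{h}}_{inl}$, which a further trace-lemma pass turns into $\frac{1}{N_t}\mathrm{Tr}(\mathbf{R}_{il}\mathbf{W}_{n,[inl]}^{-1}\mathbf{R}_{jq}\mathbf{W}_{n,[inl]}^{-1})$. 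A rank-one perturbation argument then removes all deletions inside these normalized traces and identifies the surviving factor $1+\alpha_{inl}\tilde{\mathbf{h}}_{inl}^H\mathbf{W}_{n,[inl]}^{-1}\tilde{\mathbf{h}}_{inl}$ with $1+e_{inl}$. The SINR is now an algebraic expression in the three deterministic scalars $\frac{1}{N_t}\mathrm{Tr}(\mathbf{R}_{il}\mathbf{W}_n^{-1})$, $\frac{1}{N_t}\mathrm{Tr}(\mathbf{R}_{il}\mathbf{W}_n^{-2})$, and $\frac{1}{N_t}\mathrm{Tr}(\mathbf{R}_{il}\mathbf{W}_n^{-1}\mathbf{R}_{jq}\mathbf{W}_n^{-1})$.

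The third step supplies the deterministic equivalents of these traces. The first is Theorem \ref{theorem:de} with $\mathbf{Q}=\mathbf{R}_{il}$, $\mathbf{S}=\mathbf{0}$, and $z=-\delta$ (identifying $\mathbf{W}_n$ with $\mathbf{B}-z\mathbf{I}$ after the rescaling of \eqref{eq:chanl_mdl} that matches the normalization in the theorem), yielding $e_{inl}$ and $\mathbf{T}_n$ in \eqref{eq:e}, \eqref{eq:T}. For the second, one uses that $z\mapsto\frac{1}{N_t}\mathrm{Tr}(\mathbf{Q}(\mathbf{B}-z\mathbf{I})^{-1})$ and its deterministic equivalent are holomorphic on $\mathbb{C}\setminus\mathbb{R}^{+}$, so their derivatives converge too; since $(\mathbf{B}-z\mathbf{I})^{-2}=\frac{d}{dz}(\mathbf{B}-z\mathbf{I})^{-1}$, differentiating the fixed-point system \eqref{eq:stieltjes_transform} in $z$ gives $e'_{inl}$ and $\mathbf{T}'_n$ in \eqref{eq:e'}, \eqref{eq:T'}. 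The third trace is not covered by Theorem \ref{theorem:de} directly; I would obtain it by the same differentiation principle applied to the perturbed matrix $\mathbf{B}+u\mathbf{R}_{il}-z\mathbf{I}$ and $\partial_u$ at $u=0$ (a ``bilinear'' deterministic-equivalent), producing, for each fixed $(i,l)$, the linear fixed-point system for $\{e'_{jnq,inl}\}_{j,q}$ and $\mathbf{T}'_{inl}$ in \eqref{eq:e_il'}, \eqref{eq:T_il'}, hence $b_{inl}$. Substituting all these limits into the reduced SINR and collecting terms produces \eqref{eq:de_sinr}; a closing Slutsky-type step, using that the limiting denominator is bounded away from zero, turns the termwise and summed convergences into the claimed $\gamma_{inl}-\bar{\gamma}_{inl}\to 0$ almost surely.

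The step I expect to be the main obstacle is the last of the three traces, i.e.\ showing the interference cross-terms $|\tilde{\mathbf{h}}_{inl}^H\mathbf{W}_n^{-1}\tilde{\mathbf{h}}_{jnq}|^2$ collapse to the ``product-of-two-resolvents'' quantity $\frac{1}{N_t}\mathrm{Tr}(\mathbf{R}_{il}\mathbf{W}_n^{-1}\mathbf{R}_{jq}\mathbf{W}_n^{-1})$ and establishing \emph{its} deterministic equivalent: this requires introducing the extra family of coupled equations indexed by $(i,l)$ and proving that the corresponding linear system has a unique, bounded-norm solution $\mathbf{T}'_{inl}$. Secondary difficulties are the bookkeeping for the singly- and doubly-deleted resolvents ($\mathbf{W}_{n,[inl]}$, $\mathbf{W}_{n,[inl,jnq]}$) and showing all residual terms vanish (in the appropriate normalization) uniformly in $i,n,l$, and confirming that the specific normalization of \eqref{eq:chanl_mdl} together with the small regularizer $\delta$ meets the hypotheses needed to apply Theorem \ref{theorem:de} and its derivatives.
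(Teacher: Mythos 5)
Your proposal matches the paper's own proof in essentially every step: the same decomposition into normalization, signal, and interference terms, the same use of the matrix-inversion (Sherman--Morrison), trace, and rank-one perturbation lemmas, the same application of Theorem \ref{theorem:de} at $z=-\delta$ with $e'_{inl}$ obtained by differentiating the resolvent/fixed-point system, and the same treatment of the cross-resolvent trace $\frac{1}{N_t}\mathrm{Tr}(\mathbf{R}_{jq}\mathbf{W}_n^{-1}\mathbf{R}_{il}\mathbf{W}_n^{-1})$ via a perturbed resolvent (the paper differentiates $\mathbf{\Gamma}_n+\beta\mathbf{I}-z\mathbf{R}_{il}$ in $z$ at $z=0$, which is the same device as your $\partial_u$ at $u=0$), yielding the coupled system \eqref{eq:e_il'}, \eqref{eq:T_il'} and hence $b_{inl}$. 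The only cosmetic difference is the order in which the trace lemma is applied to the inner and outer channel vectors in the interference term, so the proposal is correct and takes essentially the same approach as the paper.
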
 

\begin{IEEEproof}
The proof is given in appendix \ref{app:theorem_de_sinr}.
\end{IEEEproof}

Though being a deterministic expression, computing $\bar{\gamma}_{inl}$ as shown in \eqref{eq:de_sinr} involves significant complexity since it requires solving a branch of fixed-point equations \eqref{eq:e} - \eqref{eq:T_il'}. Recall that the RZF precoding exploited in this work is essentially an approximation of the prevailing ZF precoding scheme with very small regularization factor $\delta$. Based on this fact, one may guess that the interference term of $\bar{\gamma}_{inl}$, i.e. the first term for the denominator of the right-hand side in \eqref{eq:de_sinr} converges to zero as $\delta$ goes to zero. To prove this, we first need the following additional assumptions.

\begin{assumption}
For arbitrary large $N_t$, the minimum eigenvalue of $\frac{1}{N_t} \sum_{i,l} \tilde{\mathbf{h}}_{inl} \tilde{\mathbf{h}}_{inl}^H$ is always bounded away from zero. \label{assumption:eign}
\end{assumption}

\begin{assumption}
For arbitrary large $N_t$, the limit $\bar{e}_{inl} = \lim_{\delta \to 0} \delta e_{inl} (\delta)$ always exists and $\bar{e}_{inl} > \epsilon$ for any $i$, $n$ and $l$ given some $\epsilon > 0$. \label{assumption:limit}
\end{assumption}

Under these two assumptions, the original deterministic equivalent term \eqref{eq:de_sinr} can be further approximated according to the following theorem.

\begin{theorem}
Let assumptions \ref{assumption:eign} and \ref{assumption:limit} hold true. As $\delta \to 0$, the deterministic equivalent expression \eqref{eq:de_sinr} converges to 
\begin{equation}
    \lim_{\delta \to 0} \bar{\gamma}_{inl}=\frac{N_{t}\alpha_{inl}p_{inl}\bar{e}_{inl}}{\sigma_{in}^{2}},
\end{equation}
where
\begin{equation}
    \bar{e}_{inl}  = \frac{1}{N_t} \mathrm{Tr}\left({\mathbf{R}}_{il}\left( \sum_{j,q}\frac{\alpha_{jnq}{\mathbf{R}}_{jq}}{N_t\bar{e}_{jnq}}+ \mathbf{I}  \right)^{-1} \right), \forall i,n,l. \label{eq:e_hat}
\end{equation} \label{theorem:de_sinr_zf}
\end{theorem}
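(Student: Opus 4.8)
The plan is to take the $\delta \to 0$ limit in each of the fixed-point quantities appearing in \eqref{eq:de_sinr} and show that the interference term in the denominator vanishes while the signal and noise terms survive after an appropriate rescaling. The key bookkeeping device is to track the order in $\delta$ of every quantity. First I would observe that as $\delta \to 0$, $\mathbf{W}_n \to \sum_{j,q}\alpha_{jnq}\tilde{\mathbf{h}}_{jnq}\tilde{\mathbf{h}}_{jnq}^H$, which by Assumption~\ref{assumption:eign} is invertible with minimum eigenvalue bounded away from zero; hence $\mathbf{W}_n^{-1}$ and $\mathbf{W}_n^{-2}$ stay bounded. On the deterministic-equivalent side, the matrix $\mathbf{T}_n$ in \eqref{eq:T} blows up like $1/\delta$, which motivates Assumption~\ref{assumption:limit}: the rescaled quantities $\bar e_{inl} = \lim_{\delta\to 0}\delta\, e_{inl}(\delta)$ exist and are bounded below. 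Multiplying \eqref{eq:T} through by $\delta$ and using $e_{jnq} = \bar e_{jnq}/\delta + o(1/\delta)$, so that $\delta/(1+e_{jnq}) \to \delta^2/\bar e_{jnq}\cdot(1+o(1))$... more carefully, $1/(1+e_{jnq}) = \delta/\bar e_{jnq} + o(\delta)$, I would derive that $\delta \mathbf{T}_n \to \big(\frac{1}{N_t}\sum_{j,q}\frac{\alpha_{jnq}\mathbf{R}_{jq}}{\bar e_{jnq}} + \mathbf{I}\big)^{-1}$, which upon taking $\frac{1}{N_t}\mathrm{Tr}(\mathbf{R}_{il}\cdot)$ yields exactly the fixed-point equation \eqref{eq:e_hat} for $\bar e_{inl}$ (after absorbing the $N_t$ into the denominator as written). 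Existence/uniqueness of this limiting fixed point is where I would lean on Assumption~\ref{assumption:limit} rather than re-derive it via a standard monotonicity/contraction argument.

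Next I would determine the orders of $e'_{inl}$, $e'_{jnq,inl}$ and the auxiliary matrices $\mathbf{T}'_n$, $\mathbf{T}'_{inl}$. Since $\mathbf{T}'_n = \mathbf{T}_n(\cdots)\mathbf{T}_n$ with the bracketed term containing an additive $\mathbf{I}$ and terms with $e'_{jnq}/(1+e_{jnq})^2$, a consistency check on the scaling shows $e'_{inl}$ and $\mathbf{T}'_n$ scale like $1/\delta^2$, and similarly for the primed cross-terms. The crucial point is the interference numerator $b_{inl} = \sum_{(j,q)\neq(i,l)}\alpha_{jnq}p_{jnq} e'_{jnq,inl}/e'_{jnq}$: I expect the ratio $e'_{jnq,inl}/e'_{jnq}$ to remain bounded (both are $\Theta(1/\delta^2)$), so $b_{inl} = O(1)$. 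Then in \eqref{eq:de_sinr} the numerator $N_t\alpha_{inl}p_{inl}e_{inl}^2(1+e_{inl})^2 \sim N_t\alpha_{inl}p_{inl}\,\bar e_{inl}^2/\delta^2 \cdot \bar e_{inl}^2/\delta^2 = N_t\alpha_{inl}p_{inl}\bar e_{inl}^4/\delta^4$, wait — I need $e_{inl}^2(1+e_{inl})^2 \sim \bar e_{inl}^4/\delta^4$; and in the denominator $e'_{inl}b_{inl} \sim (\text{const}/\delta^2)\cdot O(1) = O(1/\delta^2)$ while $e'_{inl}\sigma_{in}^2(1+e_{inl})^2 \sim (\text{const}/\delta^2)(\bar e_{inl}^2/\delta^2) = O(1/\delta^4)$. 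Hence the first (interference) term of the denominator is lower order by a factor $\delta^2$ and vanishes relative to the noise term; dividing numerator and denominator by $1/\delta^4$ and substituting the explicit constants should collapse the expression to $N_t\alpha_{inl}p_{inl}\bar e_{inl}/\sigma_{in}^2$ once the powers of $\bar e_{inl}$ and $(1+e_{inl})^2$ cancel correctly against $e'_{inl}$.

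To make the cancellation of $e'_{inl}$ against $(1+e_{inl})^2$ transparent I would isolate the limit $\bar e'_{inl} := \lim_{\delta\to 0}\delta^2 e'_{inl}$ by rescaling \eqref{eq:T'}: $\delta^2\mathbf{T}'_n = (\delta\mathbf{T}_n)\big(\frac{1}{N_t}\sum_{j,q}\frac{\alpha_{jnq}(\delta^2 e'_{jnq})\mathbf{R}_{jq}}{\delta^2(1+e_{jnq})^2} + \delta^2\mathbf{I}\big)(\delta\mathbf{T}_n)\cdot\frac{1}{\delta^2}$ — here I must be careful, the trailing $+\mathbf{I}$ inside the bracket becomes $+\delta^2\mathbf{I}\to 0$, so $\delta^2\mathbf{T}'_n \to \bar{\mathbf{T}}_0\big(\frac{1}{N_t}\sum_{j,q}\frac{\alpha_{jnq}\bar e'_{jnq}\mathbf{R}_{jq}}{\bar e_{jnq}^2}\big)\bar{\mathbf{T}}_0$ where $\bar{\mathbf{T}}_0 = \lim\delta\mathbf{T}_n$, giving a linear fixed-point system for $\{\bar e'_{inl}\}$ that is homogeneous — consistent with the $e'$'s only appearing as ratios in the final SINR. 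The main obstacle I anticipate is precisely this order-counting step for the primed quantities: I must verify that the limiting linear system for the $\bar e'$'s is well-posed (the relevant operator has spectral radius $<1$ so the fixed point is unique and finite), and that no degeneracy makes $e'_{jnq}$ vanish faster than $1/\delta^2$ in the denominator of $b_{inl}$ — otherwise the interference term would not be negligible. Granting Assumptions~\ref{assumption:eign}–\ref{assumption:limit}, which are tailored exactly to rule out such degeneracies, the rest is careful substitution of the leading-order expansions into \eqref{eq:de_sinr} and reading off the limit.
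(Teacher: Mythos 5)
Your overall strategy is the same as the paper's: rescale every fixed-point quantity by powers of $\delta$, show $\delta\mathbf{T}_n \to \bigl(\tfrac{1}{N_t}\sum_{j,q}\alpha_{jnq}\mathbf{R}_{jq}/\bar e_{jnq}+\mathbf{I}\bigr)^{-1}$ (which gives \eqref{eq:e_hat}), and observe that the interference term in \eqref{eq:de_sinr} is of lower order than the noise term because of the extra factor $(1+e_{inl})^2=\Theta(1/\delta^2)$, assuming $b_{inl}=O(1)$ (an assumption the paper also makes implicitly). However, the decisive step is left open and your attempt at it is wrong. After your order count the limit reads $N_t\alpha_{inl}p_{inl}\bar e_{inl}^2/(\bar e'_{inl}\sigma_{in}^2)$ with $\bar e'_{inl}:=\lim_{\delta\to 0}\delta^2 e'_{inl}$; the theorem requires the identification $\bar e'_{inl}=\bar e_{inl}$, which the paper asserts as $\lim_{\delta\to0}\delta^2 e'_{inl}=\bar e_{inl}$ and then cancels. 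You never establish this, and your proposed route actually precludes it: when rescaling \eqref{eq:T'} you claim the trailing $+\mathbf{I}$ becomes $+\delta^2\mathbf{I}\to 0$, making the limiting system for the $\bar e'$'s homogeneous. The correct rescaling is
\begin{equation*}
\delta^2\mathbf{T}'_n=(\delta\mathbf{T}_n)\left(\frac{1}{N_t}\sum_{j,q}\frac{\alpha_{jnq}\,e'_{jnq}\,\mathbf{R}_{jq}}{(1+e_{jnq})^2}+\mathbf{I}\right)(\delta\mathbf{T}_n),
\end{equation*}
and since $e'_{jnq}/(1+e_{jnq})^2=\delta^2 e'_{jnq}/(\delta+\delta e_{jnq})^2\to\bar e'_{jnq}/\bar e_{jnq}^2$ is $O(1)$, the identity survives on equal footing: the limiting system is affine, not homogeneous.

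That inhomogeneous $\mathbf{I}$ is exactly what delivers the theorem. Writing $\bar{\mathbf{T}}_0=\lim_{\delta\to0}\delta\mathbf{T}_n$, the candidate $\bar e'_{jnq}=\bar e_{jnq}$ turns the limiting bracket into $\tfrac{1}{N_t}\sum_{j,q}\alpha_{jnq}\mathbf{R}_{jq}/\bar e_{jnq}+\mathbf{I}=\bar{\mathbf{T}}_0^{-1}$, so the fixed-point relation for $\bar e'_{inl}$ collapses to $\tfrac{1}{N_t}\text{Tr}\bigl(\mathbf{R}_{il}\bar{\mathbf{T}}_0\bigr)=\bar e_{inl}$, confirming $\bar e'_{inl}=\bar e_{inl}$ (uniqueness following from the spectral-radius condition you invoke). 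Under your homogeneous version, the same spectral-radius argument would force $\bar e'=0$, contradicting your own requirement that $e'_{jnq}=\Theta(1/\delta^2)$ and making the rescaled noise term vanish, i.e.\ the limit would diverge rather than equal $N_t\alpha_{inl}p_{inl}\bar e_{inl}/\sigma_{in}^2$. So the concrete gap is the missing identity $\lim_{\delta\to0}\delta^2 e'_{inl}=\bar e_{inl}$ together with the erroneous dropping of the identity matrix in the rescaled \eqref{eq:T'}; once that is repaired, your argument coincides with the proof given in the paper (which itself states this identity without derivation, so supplying the affine fixed-point argument above would in fact strengthen it).
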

\begin{proof}
The proof is given in appendix \ref{app:theorem_de_sinr_zf}.
\end{proof}

After deriving the deterministic equivalent of the SINR expression using Theorems \ref{theorem:DE_SINR} and \ref{theorem:de_sinr_zf}, the deterministic equivalent of $R_i$, denoted by $\bar{R}_i$, can be obtained by exploiting the so-called continuous mapping theorem \cite{ash2000probability} such that
\begin{equation}
    \bar{R}_{i} = \sum_{n,l} \log \left( 1 + \frac{N_{t}\alpha_{inl}p_{inl}\bar{e}_{inl}}{\sigma_{in}^{2}} \right). \label{eq:R}
\end{equation}

Once the deterministic equivalent term $\bar{R}_i$ is calculated, we substitute it along with the equality constraints \eqref{eq:e_hat} into problem \eqref{prblm:soft_appro} to replace the original rate expression $R_i$. The converted problem can thus be formulated as

\begin{subequations}
\begin{align}
   \mathop{\text{max}}_{\{\alpha_{inl}, p_{inl}, \bar{e}_{inl}\}} \ & \sum_{i=1}^I  \mu_i \bar{R}_i+w\sum_{i=1}^I \frac{1}{1+e^{-\theta(\bar{R}_i -R_i^{\min})}} \label{obj:app} 
  \\
 \text{s. t.} \quad \quad \ & \eqref{constr:str} - \eqref{constr:powge0}, \ \eqref{eq:e_hat}.
 \end{align}  \label{prblm:app}
\end{subequations}

\section{Algorithm Design}
In this section we propose efficient algorithm designs to solve the approximated optimization problem \eqref{prblm:app}. This problem is highly non-convex mainly due to the integer constraints \eqref{constr:int} and constraints \eqref{eq:e_hat} which are implicit fixed-point equation constraints w.r.t. variable $\bar{e}_{inl}$. In the following contexts, we first perform linear relaxation techniques with tricks \cite{zhou2021efficient} on constraints \eqref{constr:int} to make the problem become a continuous optimization problem whose solutions are naturally close to integers. Next we apply the so-called implicit optimization techniques \cite{travacca2020implicit} to deal with the fixed-point equation constraints, such that the gradient of the objective function can be explicitly expressed in closed form. With the derived gradient, we further develop an efficient algorithm design based on the prevailing accelerated Frank-Wolfe method \cite{doi:10.1057/palgrave.jors.2600425} to solve problem \eqref{prblm:app}.

\subsection{Linear Relaxation of the Optimization Problem}
To solve the mixed-integer non-convex problem (MINP) \eqref{prblm:app}, an efficient approach is to perform the linear relaxation to the integer constraints so that the optimization problem becomes continuous. After finishing solving the continuous problem, the ultimate solution for the original MINP can be derived by rounding the optimization results back into 0-1 integers. However, if we simply replace the integer constraints \eqref{constr:int} with their corresponding linear relaxations $0 \leq \alpha_{inl} \leq 1, \forall i,n,l$, it
may lead to an undesirable numerical solution (i.e. roughly equal to 0.5) for which naive rounding scheme would inevitably incur huge performance loss. To address this issue, we introduce an additional exponent $r$ to $\alpha_{inl}$ and rewrite the rate expression \eqref{eq:R} as
\begin{equation}
    \hat{R}_{i} = \sum_{n,l} \log \left( 1 + \frac{N_{t}\alpha_{inl}^{r}p_{inl}\bar{e}_{inl}}{\sigma_{in}^{2}} \right). 
\end{equation}
which is exactly identical to $R_i$ in \eqref{eq:R} since $\alpha_{inl}$ is Boolean. On the other hand, once constraints \eqref{constr:int} being relaxed, the existence of $r$ will avoid the case where the optimized $\alpha_{inl}$ can hardly be rounded, given that $r$ is sufficiently large. Specifically, if $\alpha_{inl}$ takes any value between 0 and 1, its $r$-th power will be nearly 0 given that $r$ is large enough. Note that the effectiveness of such trick for linear relaxation has been demonstrated by the authors in previous works for dealing with MINP \cite{zhou2021efficient,xue2018time}. After performing linear relaxation, the original problem becomes

\begin{subequations}
\begin{align}
   \mathop{\text{max}}_{\{\alpha_{inl}, p_{inl}, \bar{e}_{inl}\}} \ & \sum_{i=1}^I  \mu_i \hat{R}_i+w\sum_{i=1}^I \frac{1}{1+e^{-\theta(\hat{R}_i -R_i^{\min})}}  \label{obj:final}
  \\
 \text{s. t.}  \quad \quad & 0 \leq \alpha_{inl} \leq 1, \quad \forall i, n, l, \\
 \quad \quad \ & \eqref{constr:str}, \eqref{constr:pow}, \eqref{constr:powge0}, \eqref{eq:e_hat}.
 \end{align} \label{prblm:final}
\end{subequations}

\subsection{Implicit Gradient of the Optimization Problem}

Though being continuous, problem \eqref{prblm:final} is still non-tractable due to the implicit fixed-point constraints \eqref{eq:e_hat}, which prohibits the exploitation of conventional gradient-based algorithms. To overcome this issue,  we first need the following highly useful theorem, such that the gradient for the objective function \eqref{obj:final} can be computed explicitly. For notation simplicity, we denote $\boldsymbol{\alpha} = [\alpha_{111}, \cdots, \alpha_{inl}, \cdots, \alpha_{INL}]^T$ and $\bar{\boldsymbol{e}} = [\bar{e}_{111}, \cdots, \bar{e}_{inl}, \cdots, \bar{e}_{INL}]^T$ in the subsequent context.

\begin{theoremNoParens}[{\cite[Th. II.1.]{travacca2020implicit}}]
    Consider a continuously differentiable implicit map $\mathbf{x} = \phi (\mathbf{x}, \mathbf{u}) $ where $\mathbf{x} \in \mathbb{R}^{n}$, $\mathbf{u} \in \mathbb{R}^{m}$ and $\phi: \mathbb{R}^{n} \times \mathbb{R}^{m} \rightarrow \mathbb{R}^{n}$. Suppose that for any given  $\mathbf{u}$, $\mathbf{x}$ is uniquely defined by the implicit equation such that $\mathbf{x} = \mathbf{x} (\mathbf{u})$. In other words, $\mathbf{x}$ can be computed using the following fixed-point iteration:
    \begin{equation}
        \mathbf{x}^{k+1} (\mathbf{u}) = \phi \left( \mathbf{x}^k(\mathbf{u}), \mathbf{u} \right)
    \end{equation}
    starting with some $\mathbf{x}^0(\mathbf{u})$. Then the gradient of $\mathbf{x}(\mathbf{u})$ w.r.t. $\mathbf{u}$ can be written as
    \begin{equation}
        \nabla_{\mathbf{u}} \mathbf{x}(\mathbf{u}) = \nabla_{\mathbf{u}} \phi(\mathbf{x}, \mathbf{u}) \left( \mathbf{I} - \nabla_{\mathbf{x}} \phi(\mathbf{x}, \mathbf{u}) \right)^{-1} .
    \end{equation} \label{theorem:io}
\end{theoremNoParens}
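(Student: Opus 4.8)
The plan is to derive the implicit function theorem for fixed-point maps directly from the standard implicit function theorem, or equivalently by differentiating both sides of the defining equation $\mathbf{x}(\mathbf{u}) = \phi(\mathbf{x}(\mathbf{u}), \mathbf{u})$. First I would treat $\mathbf{x}(\mathbf{u})$ as a well-defined function of $\mathbf{u}$ — which is exactly the uniqueness hypothesis — and apply the chain rule to both sides of $\mathbf{x}(\mathbf{u}) = \phi(\mathbf{x}(\mathbf{u}), \mathbf{u})$, viewing the right-hand side as a composition. This yields
\begin{equation}
    \nabla_{\mathbf{u}} \mathbf{x}(\mathbf{u}) = \nabla_{\mathbf{u}} \mathbf{x}(\mathbf{u}) \, \nabla_{\mathbf{x}} \phi(\mathbf{x}, \mathbf{u}) + \nabla_{\mathbf{u}} \phi(\mathbf{x}, \mathbf{u}),
\end{equation}
where the partial Jacobians $\nabla_{\mathbf{x}}\phi$ and $\nabla_{\mathbf{u}}\phi$ are evaluated at the fixed point $(\mathbf{x}(\mathbf{u}), \mathbf{u})$ and the product respects the row-vector-gradient convention used in the statement.

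Next I would rearrange this linear matrix equation. Collecting the $\nabla_{\mathbf{u}} \mathbf{x}(\mathbf{u})$ terms gives $\nabla_{\mathbf{u}} \mathbf{x}(\mathbf{u}) \left( \mathbf{I} - \nabla_{\mathbf{x}} \phi(\mathbf{x}, \mathbf{u}) \right) = \nabla_{\mathbf{u}} \phi(\mathbf{x}, \mathbf{u})$, and then right-multiplying by $\left( \mathbf{I} - \nabla_{\mathbf{x}} \phi(\mathbf{x}, \mathbf{u}) \right)^{-1}$ delivers the claimed formula. To justify the inversion I would argue that the fixed-point iteration $\mathbf{x}^{k+1} = \phi(\mathbf{x}^k, \mathbf{u})$ converging to the unique fixed point forces the spectral radius of $\nabla_{\mathbf{x}}\phi$ at that point to be at most one; combined with continuous differentiability (so that the iteration is locally a contraction in a neighborhood, ruling out eigenvalue exactly one in the relevant cases), the matrix $\mathbf{I} - \nabla_{\mathbf{x}}\phi$ is nonsingular. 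Alternatively, and more cleanly, one invokes the implicit function theorem applied to $\psi(\mathbf{x}, \mathbf{u}) \triangleq \mathbf{x} - \phi(\mathbf{x}, \mathbf{u}) = \mathbf{0}$: its Jacobian in $\mathbf{x}$ is precisely $\mathbf{I} - \nabla_{\mathbf{x}}\phi$, which is invertible exactly under the hypothesis that guarantees $\mathbf{x}(\mathbf{u})$ is locally a well-defined differentiable function, and then the implicit function theorem's gradient formula reduces to the stated expression.

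The main obstacle is not the algebra — which is a one-line chain-rule manipulation — but the careful handling of the invertibility of $\mathbf{I} - \nabla_{\mathbf{x}}\phi(\mathbf{x}, \mathbf{u})$, since the bare hypotheses (unique fixed point, convergence of the naive iteration) do not by themselves exclude the degenerate case of a unit eigenvalue of $\nabla_{\mathbf{x}}\phi$. In practice I would either add the (standard and mild) assumption that the iteration is a local contraction, i.e. $\|\nabla_{\mathbf{x}}\phi(\mathbf{x},\mathbf{u})\| < 1$ in some operator norm at the fixed point, which immediately gives the Neumann series $\left(\mathbf{I} - \nabla_{\mathbf{x}}\phi\right)^{-1} = \sum_{k \geq 0} \left(\nabla_{\mathbf{x}}\phi\right)^k$ and hence both invertibility and differentiability, or simply cite the cited reference \cite{travacca2020implicit} for the precise regularity conditions. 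A secondary, purely bookkeeping point is to keep the left/right multiplication order consistent with the gradient-as-row-vector convention so that the final formula matches the statement verbatim; this is routine once the convention is fixed at the outset.
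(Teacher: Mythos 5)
The paper does not prove this statement at all: it is imported verbatim, with its citation, from Theorem II.1 of \cite{travacca2020implicit}, so there is no in-paper proof to compare against. Your derivation --- differentiating $\mathbf{x}(\mathbf{u})=\phi(\mathbf{x}(\mathbf{u}),\mathbf{u})$ via the chain rule, collecting the $\nabla_{\mathbf{u}}\mathbf{x}(\mathbf{u})$ terms, and right-multiplying by $\left(\mathbf{I}-\nabla_{\mathbf{x}}\phi(\mathbf{x},\mathbf{u})\right)^{-1}$ --- is the standard implicit-function-theorem argument and reproduces the stated formula under the gradient convention used here. Your caveat about the invertibility of $\mathbf{I}-\nabla_{\mathbf{x}}\phi$ is the one genuinely substantive point: the hypotheses as restated in the paper (uniqueness of the fixed point and convergence of the naive iteration) do not by themselves exclude a unit eigenvalue of $\nabla_{\mathbf{x}}\phi$, and the cited source disposes of this through its own regularity assumptions; either adding a local contraction condition $\|\nabla_{\mathbf{x}}\phi(\mathbf{x},\mathbf{u})\|<1$ at the fixed point, as you suggest, or deferring to the precise conditions in \cite{travacca2020implicit}, closes the gap, so your proposal is correct as a proof sketch.
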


According to Theorem \ref{theorem:de}, recall that  $\bar{e}_{inl}$ can be uniquely obtained through the fixed-point iteration based on \eqref{eq:e_hat} with initial value of $1$  for any given $\alpha_{inl}$. In other words,  we can represent $\bar{e}_{inl}$ as $\bar{e}_{inl} = \bar{e}_{inl} (\boldsymbol{\alpha})$.  Denote the objective function \eqref{obj:final} by $ F = F_1  + F_2 $, where
\begin{align}
    & F_{1}  \triangleq \sum_{i=1}^I \mu_i f_{i1} \left(\bar{e}_{inl} \left( \boldsymbol{\alpha}  \right),\alpha_{inl}\right), \\
    & F_{2} \triangleq w \sum_{i=1}^I f_{i2}\left(\bar{e}_{inl}\left( \boldsymbol{\alpha}  \right),\alpha_{inl}\right), \\
    & f_{i1} \left(\bar{e}_{inl} \left( \boldsymbol{\alpha}  \right),\alpha_{inl}\right) \triangleq \hat{R}_i, \\
    & f_{i2}\left(\bar{e}_{inl}\left( \boldsymbol{\alpha}  \right),\alpha_{inl}\right) \triangleq \frac{1}{1+e^{-\theta \left(\hat{R}_i -R_i^{\min} \right)}}. \\
\end{align}

 According to the chain rule and Theorem \ref{theorem:io}, the partial derivative of $f_{i1}$ with respect to ${\alpha}_{inl}$ can be written as

\begin{equation}
    \frac{\partial F_{1}}{ \partial {\alpha}_{inl} } = \mu_i \frac{\partial f_{i1} \left(\bar{e}_{inl},\alpha_{inl} \right) }{ \partial {\alpha}_{inl} }  +
    \sum_{j,q}\mu_j g_{jnq} \cdot
    \frac{ \partial \bar{e}_{jnq}(\boldsymbol{\alpha}) } { \partial \alpha_{inl} }  \label{eq:F1}
\end{equation}
where
\begin{align}
    & g_{jnq} = \frac{ \partial f_{j1} \left(\bar{e}_{jnq},\alpha_{jnq} \right)} { \partial \bar{e}_{jnq} } \left|_{\bar{e}_{jnq} =\bar{e}_{jnq}(\boldsymbol{\alpha})} \right. \\
    & \frac{\partial f_{i1} \left(\bar{e}_{inl},\alpha_{inl} \right) }{ \partial {\alpha}_{inl} }
    = \frac{r N_{t}\alpha^{(r-1)}_{inl} p_{inl}\bar{e}_{inl}}{\left(1+\hat{\gamma}_{inl} \right)\sigma_{i n}^{2}}, \\
    &  \frac{ \partial f_{i1} \left(\bar{e}_{inl},\alpha_{inl} \right)} { \partial \bar{e}_{inl} }
    =  \frac{N_{t}\alpha^r_{inl} p_{inl}}{\left( 1+\hat{\gamma}_{inl} \right) \sigma_{i n}^{2}}
\end{align}
with $\hat{\gamma}_{inl} = (N_{t}\alpha_{inl}^{r}p_{inl}\bar{e}_{inl}) / \sigma_{in}^{2}$. Therefore, it remains to derive the partial derivative $\partial \bar{e}_{jnq}(\boldsymbol{\alpha}) / \partial \alpha_{inl} $ such that \eqref{eq:F1} can be fully expressed. To address this issue, we first denote each constraint in \eqref{eq:e_hat} by $\bar{{e}}_{inl} = \Phi_{inl} (\bar{\mathbf{e}}, \boldsymbol{\alpha}), \ \forall i, n, l$, after which we rewrite all constraints in \eqref{eq:e_hat} into a more compact form as $\bar{\mathbf{e}} = \Phi (\bar{\mathbf{e}}, \boldsymbol{\alpha})$.
By applying Theorem \ref{theorem:io}, we have 
\begin{equation}
    \nabla_{\boldsymbol{\alpha}} \bar{\boldsymbol{e}}(\boldsymbol{\alpha}) = \nabla_{\boldsymbol{\alpha}} \Phi(\bar{\mathbf{e}},\boldsymbol{\alpha}) \left( \mathbf{I}-\nabla_{\bar{\mathbf{e}}}\Phi(\bar{\mathbf{e}},\boldsymbol{\alpha})\right) ^{-1} \label{eq:partial_deri_e}
\end{equation}
where $\nabla_{\boldsymbol{\alpha}} \bar{\boldsymbol{e}}(\boldsymbol{\alpha})$, $\nabla_{\boldsymbol{\alpha}} {\boldsymbol{\Phi}}(\bar{\mathbf{e}}, \boldsymbol{\alpha})$ and $\nabla_{\bar{\mathbf{e}}} {\boldsymbol{\Phi}}(\bar{\mathbf{e}},\boldsymbol{\alpha})$ are $(I\cdot N \cdot L)\times(I\cdot N \cdot L)$ gradient matrices whose $(i \cdot n \cdot l)-(j \cdot n \cdot q)$ elements are $\partial \bar{e}_{inl}(\boldsymbol{\alpha}) / \partial \alpha_{jnq} $, $\partial {\Phi}_{inl}(\bar{\mathbf{e}},\boldsymbol{\alpha}) / \partial \alpha_{jnq} $  and $\partial {\Phi}_{inl}(\bar{\mathbf{e}},\boldsymbol{\alpha}) / \partial \bar{e}_{jnq} $, respectively. By further exploiting the chain rule, we have
\begin{align}
\frac{\partial {\Phi}_{inl}(\bar{\mathbf{e}},\boldsymbol{\alpha})}{ \partial \alpha_{jnq} }
& = -\frac{\text{Tr}\left({\mathbf{R}}_{il}\mathbf{T}_{n}^{-1}{\mathbf{R}}_{jq}\mathbf{T}_{n}^{-1} \right)}{N_{t}^2 \bar{e}_{jnq}}
\\
     \frac{\partial {\Phi}_{inl}(\bar{\mathbf{e}},\boldsymbol{\alpha})}{ \partial \bar{e}_{jnq} } & =  \frac{\alpha_{jnq}\text{Tr}\left({\mathbf{R}}_{il}\mathbf{T}_{n}^{-1}{\mathbf{R}}_{jq}\mathbf{T}_{n}^{-1} \right)}{N_{t}^2 \bar{e}_{jnq}^2}. \label{eq:partial_deri_phi_e}
\end{align} 
Combining \eqref{eq:partial_deri_e} - \eqref{eq:partial_deri_phi_e}, $\nabla_{\boldsymbol{\alpha}} \bar{\boldsymbol{e}}(\boldsymbol{\alpha})$ can be explicitly computed, so as each of its elements $\partial \bar{e}_{inl}(\boldsymbol{\alpha}) / \partial \alpha_{jnq} $. This finishes the derivation of ${\partial F_{1}}/{ \partial {\alpha}_{inl} }$. 

As for the derivative w.r.t $f_{i2}$, note that $f_{i2}$ is a function w.r.t. $f_{i1}$, or in other words, $\hat{R}_{i}$, its partial derivative w.r.t. $\alpha_{inl}$ can be easily obtained according to the chain rule: 

\begin{equation}
    \frac{\partial F_{2}}{ \partial {\alpha}_{inl} } = w \left( m_{i} \frac{\partial f_{i1} }{ \partial {\alpha}_{inl} }  +
    \sum_{j,q} m_j g_{jnq} \cdot
    \frac{ \partial \bar{e}_{jnq}(\boldsymbol{\alpha}) } { \partial \alpha_{inl} } \right), \label{eq:F2}
\end{equation}
where
\begin{equation}
     m_i = \frac{\theta e^{-\theta\left(\hat{R}_i -R_i^{min}\right)}}{\left(1+e^{-\theta(\hat{R}_i -R_i^{min})}\right)^2} \label{eq:m}
\end{equation}
is the derivative of the outer sigmoid function. Combining \eqref{eq:F1} and \eqref{eq:F2} together we derive the partial derivative of \eqref{obj:final} w.r.t. $\alpha_{inl}$:
\begin{equation}
     \frac{\partial F}{ \partial {\alpha}_{inl} } = \frac{\partial F_{1}}{ \partial {\alpha}_{inl} } + \frac{\partial F_{2}}{ \partial {\alpha}_{inl} }. \label{eq:partial_deri_F_alpha}
\end{equation}

On the other hand, the derivation for the gradient w.r.t. $p_{inl}$ is rather straightforward, since the implicit constraints \eqref{eq:e_hat} does not involve $p_{inl}$. Specifically, the partial derivative of $F$ w.r.t. $p_{inl}$ can be computed by 
\begin{equation}
\frac{\partial F}{ \partial {p}_{inl} } = \frac{N_{t} \alpha^r_{inl}\bar{e}_{inl}(1+wm_i)}{\left( 1+\hat{\gamma}_{inl} \right) \sigma_{i n  }^{2}}. \label{eq:partial_deri_F_p}
\end{equation}

Let $\mathbf{p} = [p_{111}, \cdots, p_{inl}, \cdots, p_{INL}]^T$. 
With the help of \eqref{eq:partial_deri_F_alpha} and \eqref{eq:partial_deri_F_p}, the gradient of F w.r.t. $\boldsymbol{\alpha}$ and $\mathbf{p}$ can be respectively expressed as
\begin{align}
    \nabla_{\boldsymbol{\alpha}} F({\boldsymbol{\alpha}}) & = \left[ \frac{\partial F}{ \partial {\alpha}_{111} }, \cdots, \frac{\partial F}{ \partial {\alpha}_{inl} },\cdots, \frac{\partial F}{ \partial {\alpha}_{INL} } \right], \\
    \nabla_{\mathbf{p}} F({\mathbf{p}}) & = \left[ \frac{\partial F}{ \partial {p}_{111} }, \cdots, \frac{\partial F}{ \partial {p}_{inl} },\cdots, \frac{\partial F}{ \partial {p}_{INL} } \right].
\end{align}

\subsection{Accelerated Frank-Wolfe Method}
To tackle the coupling of the optimization variables $\alpha_{inl}$ and $p_{inl}$, a natural solution \eqref{prblm:final} is to employ alternating optimization techniques, which decomposes the original problem into the following two subproblems:
\begin{subequations}
\begin{align}
   \mathop{\text{max}}_{\{\boldsymbol{\alpha}, \bar{\mathbf{e}}\}} \ & \sum_{i=1}^I  \mu_i \hat{R}_i+w\sum_{i=1}^I \frac{1}{1+e^{-\theta(\hat{R}_i -R_i^{\min})}} 
  \\
 \text{s. t.}  \ \  & 0 \leq \alpha_{inl} \leq 1, \quad \forall i, n, l, \label{constr:alpha_1} \\
    & \sum_{i, l} \alpha_{inl} \leq L_{\text{max}}, \quad \forall \ n, \label{constr:alpha_2} \\
 & \bar{e}_{inl}  = \frac{1}{N_t} \mathrm{Tr}\left({\mathbf{R}}_{il}\left( \sum_{j,q}\frac{\alpha_{jnq}{\mathbf{R}}_{jq}}{N_t\bar{e}_{jnq}}+ \mathbf{I}  \right)^{-1} \right), \forall i,n,l. 
 \end{align} \label{subprblm:alpha}
\end{subequations}
and
\begin{subequations}
\begin{align}
   \mathop{\text{max}}_{\{ {p}_{inl}\}} \ & \sum_{i=1}^I  \mu_i \hat{R}_i+w\sum_{i=1}^I \frac{1}{1+e^{-\theta(\hat{R}_i -R_i^{\min})}} 
  \\
 \text{s. t.}  \ \  
    & \sum_{i,n,l} p_{inl} \leq P_{\text{max}}, \label{constr:p_1} \\
    &  p_{inl} \geq 0, \quad \forall i, n, l, \label{constr:p_2}
 \end{align} \label{subprblm:p}
\end{subequations}
respectively. Problem \eqref{subprblm:alpha} and \eqref{subprblm:p} are alternately optimized while holding the other block of variables fixed. Since the gradient w.r.t.  $\alpha_{inl}$ and $p_{inl}$ have already derived explicitly in the previous subsection, it is reminiscent to exploit gradient-based algorithms to solve the aforementioned two subproblems, such as projected gradient descent \cite{nesterov2003introductory}, projected accelerated gradient method \cite{allen2014linear} or Frank-Wolfe method \cite{frank1956algorithm}. Among these kinds of algorithms, FW-based algorithms are particularly suitable for our considered problems \eqref{subprblm:alpha} and \eqref{subprblm:p}. The reason lies that these two subproblems only involves simple linear constraints, or more specifically, (capped) simplex constraints, such as constraints \eqref{constr:alpha_1} and \eqref{constr:alpha_2}, or constratints \eqref{constr:alpha_1}
respectively. Problem \eqref{subprblm:alpha} and \eqref{subprblm:p} are alternately optimized while holding the other block of variables fixed. Since the gradient w.r.t.  $\alpha_{inl}$ and $p_{inl}$ have already derived explicitly in the previous subsection, it is reminiscent to exploit gradient-based algorithms to solve the aforementioned two subproblems, such as projected gradient descent \cite{nesterov2003introductory}, projected accelerated gradient method \cite{allen2014linear} or Frank-Wolfe method \cite{frank1956algorithm}. Among these kinds of algorithms, FW-based algorithms are particularly suitable for our considered problems \eqref{subprblm:alpha} and \eqref{subprblm:p}. The reason lies that these two subproblems only involves simple linear constraints after calculating (implicit) gradients, or more specifically, (capped) simplex constraints, such as constraints \eqref{constr:alpha_1} and \eqref{constr:alpha_2}, or constraints \eqref{constr:p_1} and \eqref{constr:p_2}. This makes the optimization problem at each iteration of FW-based algorithms rather easy to solve, or even optmized in closed form.

However,  conventional Frank-Wolfe method comes with reduced complexity at the sacrifice of slow convergence. To overcome this drawback, the authors in \cite{li2020does} proposed an accelerated Frank-Wolfe (AFW) method. The key idea of AFW is to combine Frank-Wolfe method with the momentum, so that it takes the advantages of both the easier implementation of FW algorithms as well as the fast convergence of accelerated gradient descent methods. Therefore, we finally adopt this AFW method in our work to solve problem \eqref{prblm:final}, the algorithm of which can be summarized as follows, where $\mathbf{y}$, $\boldsymbol{\lambda}$, $\mathbf{v}$ and $\mathbf{z}$, $\boldsymbol{\nu}$, $\mathbf{w}$ are auxiliary variables w.r.t. $\boldsymbol{\alpha}$ and $\mathbf{p}$ respectively, $\eta^{(k)}$ is the ascent step size at the $k$-th iteration.

\begin{figure}[htb]
  \centering
\begin{minipage}{.95\linewidth}
\begin{algorithm}[H]
	\caption{Once-and-for-all multi-user scheduling algorithm with statistical CSI based on AFW \vspace{0.1cm} }
	\begin{algorithmic}[1]
	 \STATE Initialize $\boldsymbol{\alpha}$, $\mathbf{v}$, $\mathbf{p}$ and $\mathbf{w}$, $\boldsymbol{\lambda}^{(0)}=0$, $\boldsymbol{\nu}^{(0)}=0$
		\FOR {$ k =0,1,\cdots,K$} 
		\STATE  $\mathbf{y}^{(k)}=\boldsymbol{\alpha}^{(k)}+\eta^{(k)}(\mathbf{v}^{(k)}-\boldsymbol{\alpha}^{(k)})$;
		\STATE 
		Compute the gradient of $\mathbf{y}^{(k)}$: $\nabla_{\boldsymbol{\alpha}} F(\mathbf{y}^{(k)})$;
		\STATE 
		$ \boldsymbol{\lambda}^{(k+1)}=\boldsymbol{\lambda}^{(k)}+\eta^{(k)}(\nabla F(\mathbf{y}^{(k)})-\boldsymbol{\lambda}^{(k)})$;
		\STATE
		$\mathbf{v}^{(k+1)}=\text{arg max}_{\boldsymbol{\alpha}}\left<\boldsymbol{\lambda}^{(k+1)},\boldsymbol{\alpha} \right>$;
		\STATE
		$\boldsymbol{\alpha}^{(k+1)}=\boldsymbol{\alpha}^{(k)}+\eta^{(k)}(\mathbf{v}^{(k+1)}-\boldsymbol{\alpha}^{(k)})$;
		\STATE  $\mathbf{z}^{(k)}=\mathbf{p}^{(k)}+\eta^{(k)}(\mathbf{w}^{(k)}-\mathbf{p}^{(k)})$;
		\STATE 
		Compute the gradient of $\mathbf{z}^{(k)}$: $\nabla_{\mathbf{p}} F(\mathbf{z}^{(k)})$;
		\STATE 
		$ \boldsymbol{\nu}^{(k+1)}=\boldsymbol{\nu}^{(k)}+\eta^{(k)}(\nabla F(\mathbf{z}^{(k)})-\boldsymbol{\nu}^{(k)})$;
		\STATE
		$\mathbf{w}^{(k+1)}=\text{arg max}_{p}\left<\boldsymbol{\nu}^{(k+1)},\mathbf{p} \right>$;
		\STATE
		$\mathbf{p}^{(k+1)}=\mathbf{p}^{(k)}+\eta^{(k)}(\mathbf{w}^{(k+1)}-\mathbf{p}^{(k)})$;
		\ENDFOR
    \STATE
    \RETURN  $\boldsymbol{\alpha}^{(K)}$ and $\mathbf{p}^{(K)}$  
	\end{algorithmic} 
	\label{alg:AFW}
\end{algorithm}
\end{minipage}
\end{figure}

\section{Numerical Results}
In this section, we present various numerical results to
evaluate the performance of our proposed algorithm. Throughout the simulations, the number of receive antennas $Nr$ and the maximum number of data streams for each user $L$ are set to be 4 and 2, respectively, unless otherwise stated. Each user’s weight $\mu_i$ is equal to 1. The maximum transmit power at the BS is $P_{\max} = 10$ Watt. The regularization coefficient of RZF precoding is $10^{-10}$. Two kinds of channels are considered throughout the simulations: the first one is Rayleigh channel fading with identity channel covariance matrices, while the second is computed with the help of the \emph{Quadriga} wireless channel simulator.

\subsection{Effectiveness of Deterministic Equivalent}
In this subsection we focus on verifying the effectiveness of the proposed approximated rate expression \eqref{eq:R} based on deterministic equivalent. Without loss of generality, the number of schedulable RBG is set to one throughout this subsection. 

 We first investigate how accurate the proposed sum rate approximation based on deterministic equivalent with varying BS antennas, while holding the transmit SNR fixed as 10 dB. Define the relative error between real sum rate and sum rate with deterministic equivalent as
 \begin{equation}
     \epsilon = \frac{\sum_{i}\mu_iR_i - \sum_{i} \mu_i \bar{R}_i}{\sum_{i} \mu_i \bar{R}_i}.
 \end{equation}
 The simulation results presented in Fig.  \ref{fig:verifi_of_de_vs_BS_antenna} depict the average relative error of the sum rate approximation compared to the real sum rate with varying number of BS antennas, averaged over 1000 independent channel realizations, when the number of users are set to 4 and 8, respectively. From Fig.  \ref{fig:verifi_of_de_vs_BS_antenna}, we observe that the approximated sum rate becomes more accurate with increasing number of antennas under both Rayleigh channel and Quadriga-generated channel. Not surprisingly, the average relative error of the case with Rayleigh channel is smaller than than with Quadriga-generated channel, thanks to the elegant formulation of Rayleigh channels.

\begin{figure}[htp]
    \centering
    \includegraphics[width=1.0\linewidth]{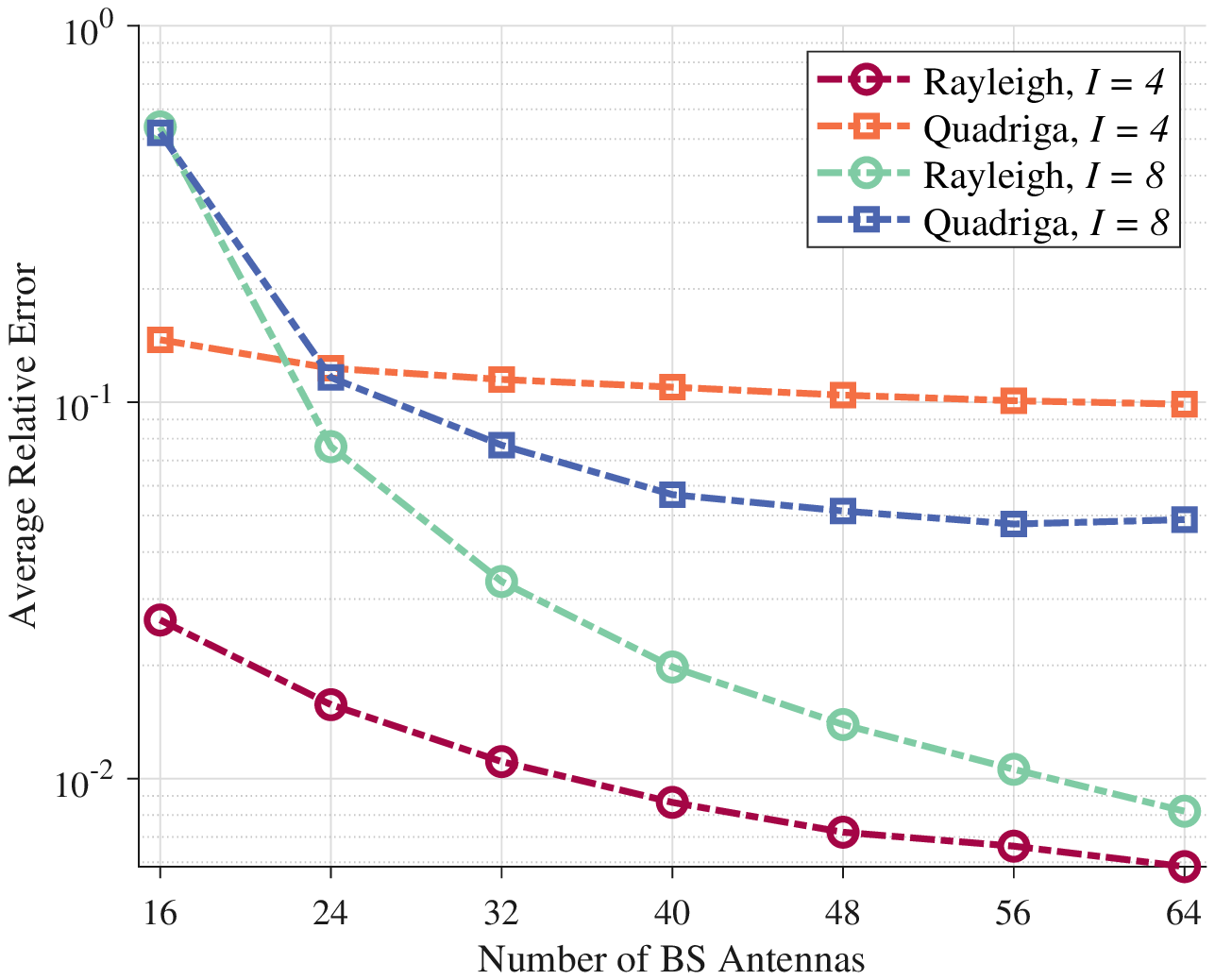}
    \caption{Relative error between real sum rate and sum rate with D.E. versus varying number of BS antennas. }
    \label{fig:verifi_of_de_vs_BS_antenna}
\end{figure}

Fig. \ref{fig:verifi_of_de_vs_snr} compares the sum rate computed by various schemes with varying transmit SNR under the case of 64 BS antennas, when the number of users are 4 and 8 respectively. It is revealed that the deterministic equivalent expression is an accurate approximation to the real rate expression for varying SNR. We conclude that the approximations are accurate even for small dimensions and can be applied to various optimization problems discussed in the sequel.

\begin{figure}[htp]
    \centering
    \includegraphics[width=1.0\linewidth]{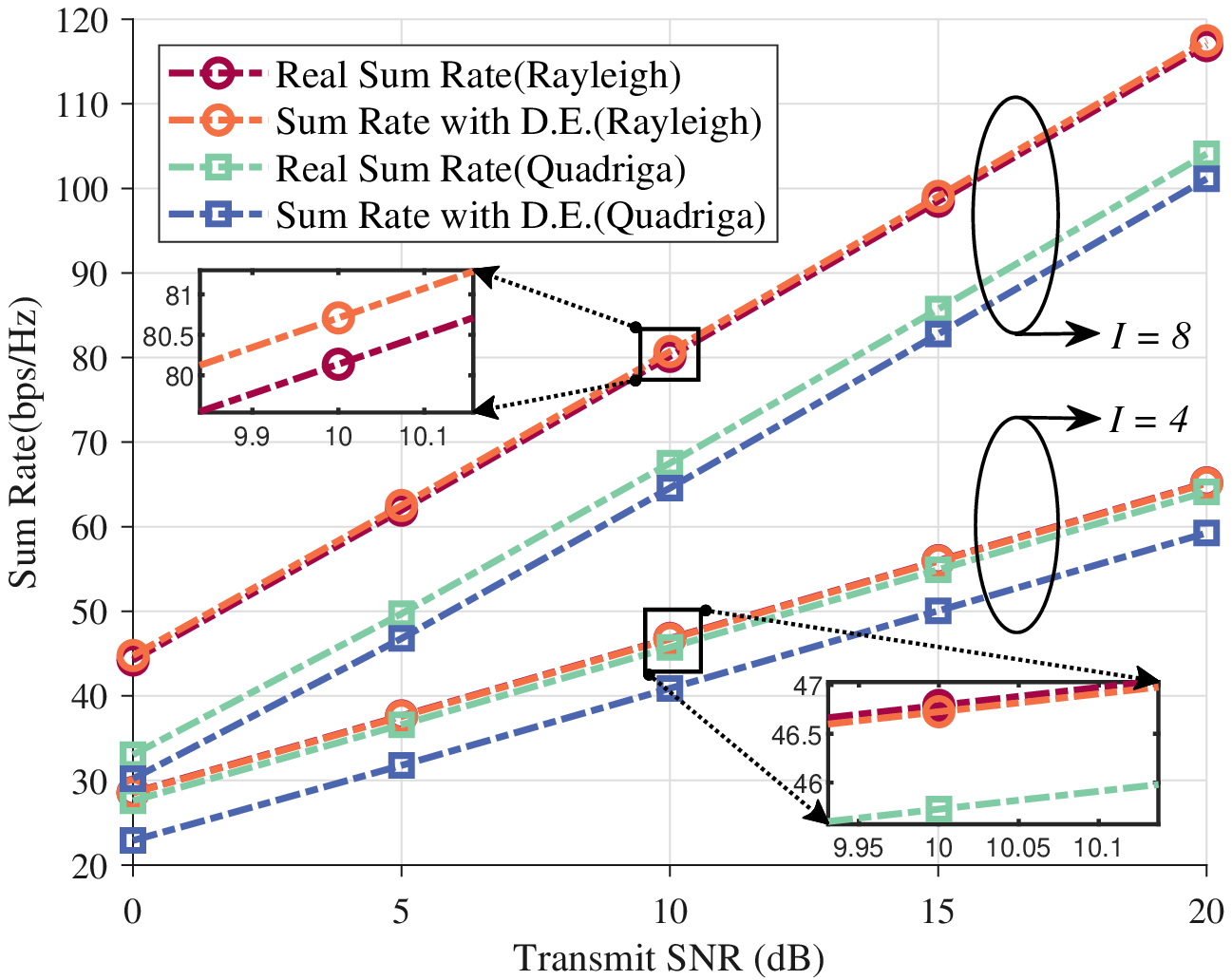}
    \caption{Sum rate versus varying transmit SNR.  }
    \label{fig:verifi_of_de_vs_snr}
\end{figure}

\subsection{Convergence of AFW}

We then examine the convergence of the proposed AFW-based algorithm.  Without loss of generality, only Rayleigh fading channel is considered in this subsection. The number of schedulable RBG is still set to one. We consider the case with 4 users, while the transmit SNR is fixed to be 10 dB. The parameters $r$, $\theta$ and $R_i^{\text{min}}$ are set to be 2, 5, and 5, respectively.

Fig. \ref{fig:ObjVal_vs_Iter} shows the average objective function value of the proposed AFW-based algorithm with 100 channel realizations, versus the number of iterations with varying penalty coefficients $w$. It is revealed from Fig. \ref{fig:ObjVal_vs_Iter} that the proposed AFW-based algorithm can converge rapidly, or in other words, in a few dozens times, under all of these cases.

\begin{figure}[htp]
    \centering
    \includegraphics[width=1.0\linewidth]{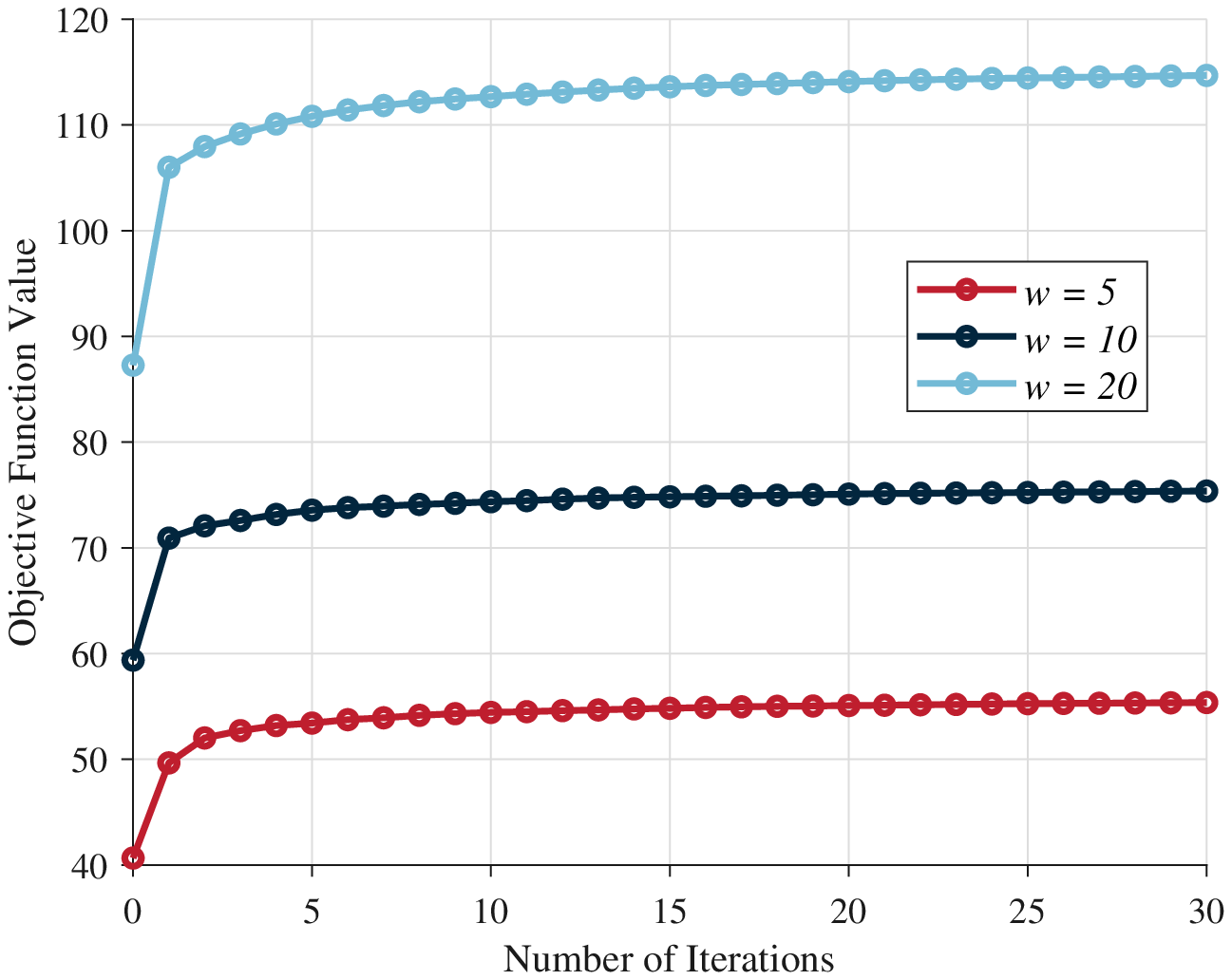}
    \caption{Average objective value versus the number of iterations.  }
    \label{fig:ObjVal_vs_Iter}
\end{figure}

\subsection{Performance Evaluation}

In this subsection, we evaluate the sum rate performance of the proposed algorithm versus varying transmit SNR. To verify the effectiveness of our proposed algorithm, we compare it with two benchmarks: the exhaustive search as well as the semi-orthogonal user selection (SUS) method. Since the exhaustive search faces with the \emph{curse of dimensional}, we consider a  simple case with 16 BS antennas and 4 users with Rayleigh fading, while each user only have one data stream. The parameters $r$, $\theta$, $w$ and $R_i^{\text{min}}$ are set to be 2, 5, 20 and 5, respectively. Fig. \ref{fig:ObjVal_vs_snr_4ue} shows the sum rate performance of various algorithms versus transmit SNR. On the one hand, it is revealed that our proposed algorithm achieve almost the same performance with that achieved by exhaustive search, which shows the effectiveness of our proposed algorithm. On the other hand, our algorithm significantly outperforms the SUS method, which is currently one of the most prevailing multi-user scheduling schemes exploited in mobile communication systems. 

\begin{figure}[htp]
    \centering
    \includegraphics[width=1.0\linewidth]{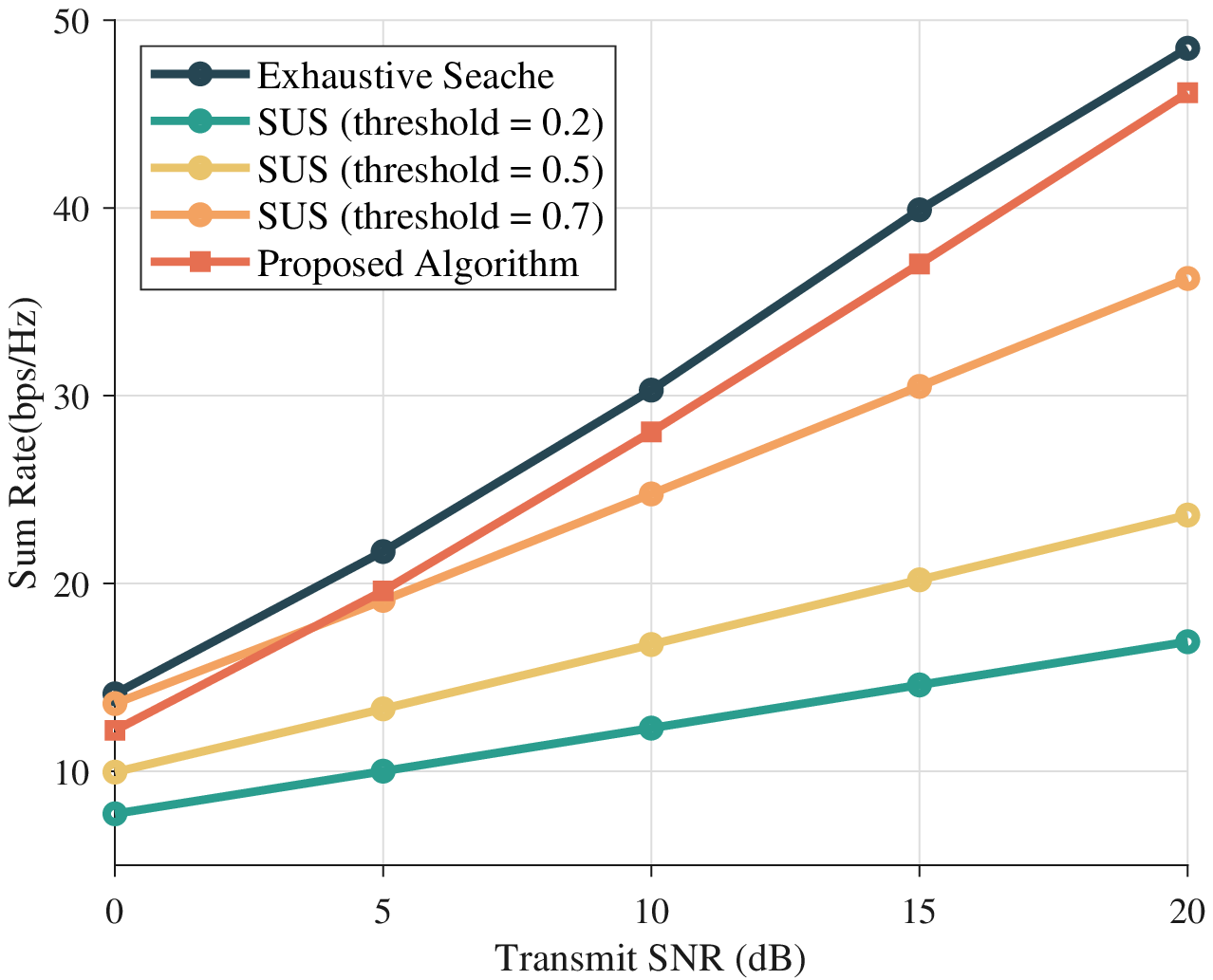}
    \caption{Sum rate versus transmit SNR.  }
    \label{fig:ObjVal_vs_snr_4ue}
\end{figure}


\section{Conclusion}
In this paper, we have investigated a long-term multi-user scheduling scheme in the MU-MIMO system based on statistical CSI. Rather than making unsatisfactory scheduling decisions at each TTI individually, the proposed scheme aims at giving a high-quality scheduling decision once and for all. With the help of deterministic equivalent, we formulate the ergodic sum rate optimization problem into a deterministic problem with fixed-point equality constraints. We the solve the problem via implicit gradient techniques and accelerated Frank-Wolfe method. The effectiveness of the proposed scheme have been verified through extensive simulations. The proposed scheduling scheme, along with their corresponding optimization algorithms, are probably better candidates than the conventional TTI-level counterpart, which may change the conventional mindset to deal with multi-user scheduling.


%

\appendices
\section{Useful Lemmas}

\begin{lemma}
    Resolvent Identity: Let $\mathbf{U}$ and $\mathbf{V}$ be two invertible complex matrices of size $N \times N$. Then
    \begin{equation}
        \mathbf{U}^{-1} - \mathbf{V}^{-1} = -\mathbf{U}^{-1}\left(\mathbf{U} - \mathbf{V} \right) \mathbf{V}^{-1}   
    \end{equation}
\end{lemma}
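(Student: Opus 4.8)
The plan is to prove this purely algebraically by expanding the right-hand side and collapsing the resulting telescoping products, since both $\mathbf{U}$ and $\mathbf{V}$ are assumed invertible and hence $\mathbf{U}^{-1}\mathbf{U} = \mathbf{V}\mathbf{V}^{-1} = \mathbf{I}$. No spectral or limiting arguments are needed; the statement is a formal identity that holds for any pair of invertible matrices.

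First I would distribute the central factor across the product on the right. Writing $\mathbf{U} - \mathbf{V}$ inside the bracket and expanding gives
\begin{equation}
    -\mathbf{U}^{-1}\left(\mathbf{U} - \mathbf{V}\right)\mathbf{V}^{-1} = -\mathbf{U}^{-1}\mathbf{U}\mathbf{V}^{-1} + \mathbf{U}^{-1}\mathbf{V}\mathbf{V}^{-1},
\end{equation}
where I have used associativity and distributivity of matrix multiplication. The next step is to simplify each term separately: in the first term $\mathbf{U}^{-1}\mathbf{U} = \mathbf{I}$ reduces it to $-\mathbf{V}^{-1}$, while in the second term $\mathbf{V}\mathbf{V}^{-1} = \mathbf{I}$ reduces it to $\mathbf{U}^{-1}$. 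Collecting these yields $\mathbf{U}^{-1} - \mathbf{V}^{-1}$, which is exactly the left-hand side, completing the verification.

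There is no substantive obstacle here, as the result is a one-line consequence of the inverse property; the only hypotheses consumed are the invertibility of $\mathbf{U}$ and $\mathbf{V}$, which guarantee that $\mathbf{U}^{-1}$ and $\mathbf{V}^{-1}$ exist and satisfy the two cancellation relations above. As an equivalent route, one could instead left-multiply the claimed identity by $\mathbf{U}$ and right-multiply by $\mathbf{V}$, reducing it to the trivially true statement $\mathbf{V} - \mathbf{U} = -\left(\mathbf{U} - \mathbf{V}\right)$; since $\mathbf{U}$ and $\mathbf{V}$ are invertible, this reduction is reversible and establishes the lemma as well. I would present the first, direct expansion as the main argument, as it is the cleanest and most self-contained.
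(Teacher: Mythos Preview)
Your proof is correct; the direct expansion using $\mathbf{U}^{-1}\mathbf{U}=\mathbf{I}$ and $\mathbf{V}\mathbf{V}^{-1}=\mathbf{I}$ is exactly the standard verification of this identity. The paper itself states the lemma without proof (it is listed among the ``Useful Lemmas'' in the appendix as a known fact), so there is nothing to compare against and your argument fills the gap cleanly.
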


\begin{lemma}
 (Matrix Inversion Lemma): Let $\mathbf{U}$ be an $N \times N$ invertible matrix and $\mathbf{x} \in \mathbb{C}^N$, $c \in \mathbf{C}$ for which $\mathbf{U} + c \mathbf{x}\mathbf{x}^H$ is invertible. Then 
 \begin{equation}
    \mathbf{x}^{H}(\mathbf{U}+c\mathbf{x}\mathbf{x}^H)^{-1}=\frac{\mathbf{x}^H\mathbf{U}^{-1}}{1+c\mathbf{x}^H\mathbf{U}^{-1}\mathbf{x}} 
\end{equation}  \label{lemma:matrix_inversion}
\end{lemma}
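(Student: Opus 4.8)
The plan is to verify the claimed identity directly by clearing the inverse, exploiting the fact that $c\mathbf{x}^H\mathbf{U}^{-1}\mathbf{x}$ is a scalar rather than by performing any genuine matrix inversion. First I would observe that the denominator $1 + c\mathbf{x}^H\mathbf{U}^{-1}\mathbf{x}$ is a single complex scalar, and argue that it is nonzero so that the right-hand side is well defined. This follows from the matrix determinant lemma, which gives $\det(\mathbf{U} + c\mathbf{x}\mathbf{x}^H) = \det(\mathbf{U})(1 + c\mathbf{x}^H\mathbf{U}^{-1}\mathbf{x})$; since both $\mathbf{U}$ and $\mathbf{U} + c\mathbf{x}\mathbf{x}^H$ are assumed invertible, their determinants are nonzero, which forces $1 + c\mathbf{x}^H\mathbf{U}^{-1}\mathbf{x} \neq 0$.

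Next, rather than inverting the rank-one update explicitly, I would verify the identity by right-multiplying the candidate expression by $\mathbf{U} + c\mathbf{x}\mathbf{x}^H$ and checking that the product collapses to $\mathbf{x}^H$. Writing $a \triangleq \mathbf{x}^H\mathbf{U}^{-1}\mathbf{x}$, a direct computation gives
\[
\frac{\mathbf{x}^H\mathbf{U}^{-1}}{1 + ca}\left(\mathbf{U} + c\mathbf{x}\mathbf{x}^H\right) = \frac{\mathbf{x}^H + c(\mathbf{x}^H\mathbf{U}^{-1}\mathbf{x})\mathbf{x}^H}{1 + ca} = \frac{(1 + ca)\mathbf{x}^H}{1 + ca} = \mathbf{x}^H,
\]
where the key step uses that $\mathbf{x}^H\mathbf{U}^{-1}\mathbf{x} = a$ is a scalar, so that $\mathbf{x}^H\mathbf{U}^{-1}\mathbf{x}\mathbf{x}^H = a\,\mathbf{x}^H$ can be factored out. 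Since $\mathbf{U} + c\mathbf{x}\mathbf{x}^H$ is invertible, right-multiplying both sides of the target equation by its inverse then shows that the candidate expression indeed equals $\mathbf{x}^H(\mathbf{U} + c\mathbf{x}\mathbf{x}^H)^{-1}$, completing the argument.

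There is essentially no hard step here; the result is just the vector-premultiplied form of the Sherman-Morrison formula. The only point requiring a moment of care is confirming that the scalar $1 + c\mathbf{x}^H\mathbf{U}^{-1}\mathbf{x}$ does not vanish, which is precisely where the hypothesis that $\mathbf{U} + c\mathbf{x}\mathbf{x}^H$ is invertible is used. As an alternative route, one could instead first establish the full Sherman-Morrison inverse $(\mathbf{U} + c\mathbf{x}\mathbf{x}^H)^{-1} = \mathbf{U}^{-1} - c\,\mathbf{U}^{-1}\mathbf{x}\mathbf{x}^H\mathbf{U}^{-1}/(1 + ca)$ and then left-multiply by $\mathbf{x}^H$, collecting the scalar factor $1 - ca/(1 + ca) = 1/(1 + ca)$; both routes reduce to the same elementary scalar manipulation.
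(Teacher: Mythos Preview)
Your proof is correct and complete. The paper itself does not prove this lemma; it merely lists it among the ``Useful Lemmas'' in the appendix as a standard result (the vector form of the Sherman--Morrison formula) and then invokes it repeatedly in the proof of Theorem~2. So there is nothing to compare against: you have supplied a clean elementary argument where the paper simply cites the identity without justification. Your extra care in using the matrix determinant lemma to confirm that $1 + c\mathbf{x}^H\mathbf{U}^{-1}\mathbf{x} \neq 0$ under the stated invertibility hypothesis is a nice touch that makes the statement fully rigorous.
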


\begin{lemma}
    Let $\mathbf{A}_1, \mathbf{A}_2,\cdots$ with $\mathbf{A}_N \in C^{N \times N}$, be a series of random matrices generated by the probability space $\left( \Omega, \mathcal{F}, P \right)$, such that for $\omega \in A \subset \Omega$, with $P(A) = 1$, $|| \mathbf{A}_N (\omega) || < K(\omega) < \infty$, uniformly on $N$. Let $\mathbf{x}_1, \mathbf{x}_2, \cdots,$ with $\mathbf{x}_N \in \mathbb{C}^{}$, be random vectors of i.i.d. entries with zero mean, variance $1/N$ and eighth order moment of order $O(1/N^4)$, independent of $\mathbf{A}_N$. Then \begin{equation}
    \mathbf{x}^H_N\mathbf{A}_N\mathbf{x}_N-\frac{1}{N}\text{Tr}\mathbf{A}_N\stackrel{N\to\infty}{\longrightarrow}0 
    \end{equation} \label{lemma:trace}
\end{lemma}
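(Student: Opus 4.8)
The statement is the classical \emph{trace lemma}---the concentration of a quadratic form in i.i.d.\ entries about its normalized trace (see e.g.\ \cite{couillet2011random})---and the plan is the textbook route: condition on $\mathbf{A}_N$, identify the conditional mean, establish a quantitative fourth-moment bound on the fluctuation, and conclude with Borel--Cantelli.

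First I would fix $\omega$ in the probability-one event on which $\|\mathbf{A}_N(\omega)\| \le K(\omega) < \infty$ for all $N$; by the independence of $\mathbf{x}_N$ and $\mathbf{A}_N$ I may then treat $\mathbf{A}_N$ as a deterministic matrix of spectral norm at most $K$ and take expectations only over $\mathbf{x}_N$. Since the entries of $\mathbf{x}_N$ are independent with mean zero and $\mathbb{E}|x_{N,i}|^2 = 1/N$, we get $\mathbb{E}\{\mathbf{x}_N^H \mathbf{A}_N \mathbf{x}_N\} = \sum_i [\mathbf{A}_N]_{ii}\,\mathbb{E}|x_{N,i}|^2 = \tfrac{1}{N}\mathrm{Tr}\,\mathbf{A}_N$, so that $\mathbf{x}_N^H\mathbf{A}_N\mathbf{x}_N - \tfrac1N\mathrm{Tr}\,\mathbf{A}_N$ is precisely a centered quadratic form.

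The core step is the estimate $\mathbb{E}\big\{\big|\mathbf{x}_N^H\mathbf{A}_N\mathbf{x}_N - \tfrac1N\mathrm{Tr}\,\mathbf{A}_N\big|^{4}\big\} \le C K^{4}/N^{2}$ with $C$ depending only on low-order moments of the (normalized) entries. Writing $x_{N,i} = z_i/\sqrt{N}$, the $z_i$ have unit variance and $\mathbb{E}|z_i|^{8} = N^{4}\,\mathbb{E}|x_{N,i}|^{8} = O(1)$ by hypothesis, and the fluctuation equals $\tfrac1N\big(\mathbf{z}^H\mathbf{A}_N\mathbf{z} - \mathrm{Tr}\,\mathbf{A}_N\big)$. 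I would expand the fourth power as a sum over index $8$-tuples, use independence to discard every pattern containing an index of multiplicity one, and note that the surviving patterns contribute $O(N^{2})$ terms each bounded by $\|\mathbf{A}_N\|^{4}$ times a product of entry-moments of order at most $8$; alternatively one runs the Burkholder / Marcinkiewicz--Zygmund inequality on the martingale-difference decomposition of $\mathbf{z}^H\mathbf{A}_N\mathbf{z}$, which for $p=4$ needs exactly a finite eighth moment. Either way the $O(1/N^{2})$ rate follows.

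Finally, since $\sum_N C K^{4}/N^{2} < \infty$, Markov's inequality gives $\sum_N \Pr\{|\mathbf{x}_N^H\mathbf{A}_N\mathbf{x}_N - \tfrac1N\mathrm{Tr}\,\mathbf{A}_N| > \varepsilon\} < \infty$ for every $\varepsilon > 0$, and Borel--Cantelli (combined with Fubini to pass from the conditional statement back to the full probability space) yields almost-sure convergence to zero. The part I expect to be the main obstacle is the fourth-moment estimate above: keeping track of which index multi-patterns survive the expectation and bounding the resulting sums of entries of $\mathbf{A}_N$ by its spectral norm is standard but somewhat tedious random-matrix combinatorics, and it is precisely here that the eighth-order moment hypothesis is needed so that the $p=4$ bound---hence the Borel--Cantelli step---goes through.
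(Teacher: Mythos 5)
Your proposal is correct, but note that the paper itself does not prove this lemma at all: it is quoted in the appendix as a standard ``trace lemma'' from the random-matrix literature (the form used in \cite{wagner2012large} and \cite{couillet2011random}, going back to Bai--Silverstein), so there is no in-paper argument to compare against. What you outline---condition on $\mathbf{A}_N$ via independence, identify $\frac1N\mathrm{Tr}\,\mathbf{A}_N$ as the conditional mean, prove $\mathbb{E}\bigl|\mathbf{x}_N^H\mathbf{A}_N\mathbf{x}_N-\frac1N\mathrm{Tr}\,\mathbf{A}_N\bigr|^4\le CK^4/N^2$ using the eighth-moment hypothesis, then Markov plus Borel--Cantelli and Fubini---is exactly the canonical proof the cited sources rely on. One small presentational caveat: the combinatorial step is not literally ``$O(N^2)$ surviving terms each bounded by $\|\mathbf{A}_N\|^4$''; the surviving index patterns are bounded through sums such as $\mathrm{Tr}(\mathbf{A}_N\mathbf{A}_N^H)\le N\|\mathbf{A}_N\|^2$ and $\mathrm{Tr}\bigl((\mathbf{A}_N\mathbf{A}_N^H)^2\bigr)\le N\|\mathbf{A}_N\|^4$, which together give the $O(N^2\|\mathbf{A}_N\|^4)$ bound for the uncentered form $\mathbf{z}^H\mathbf{A}_N\mathbf{z}$; your alternative Burkholder/Marcinkiewicz--Zygmund route, or a direct appeal to Bai--Silverstein's moment lemma with $p=4$, closes this cleanly.
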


\begin{lemma}
    Let $\mathbf{A}_1, \mathbf{A}_2,\cdots$ with $\mathbf{A}_N \in C^{N \times N}$, be deterministic with uniformly bounded spectral norm and $\mathbf{B}_1, \mathbf{B}_2,\cdots,$ with $\mathbf{B}_N \in \mathbb{C}^{N \times N}$, be random Hermitian, with eigenvalues $\lambda_1^{\mathbf{B}_N} \leq \cdots \leq \lambda_N^{\mathbf{B}_N}$ such that, with probability one, there exists $\epsilon > 0$ for which $\lambda_1^{\mathbf{B}_N} > \epsilon$ for all large $N$. Then for $\mathbf{v} \in \mathbb{C}^{N}$
    \begin{equation}
        \frac{1}{N}\text{Tr}(\mathbf{A}_N\mathbf{B}^{-1}_N)-  \frac{1}{N}\text{Tr}[\mathbf{A}_N(\mathbf{B}_N+\mathbf{v}\mathbf{v}^H)^{-1}]\stackrel{N\to\infty}{\longrightarrow}0 
    \end{equation} \label{lemma:rand_one_perturbation}
    almost surely, where $\mathbf{B}_N^{-1}$ and $(\mathbf{B}_N + \mathbf{v}\mathbf{v}^H)^{-1}$ exist with probability one.
\end{lemma}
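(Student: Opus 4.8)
The plan is to convert the difference of the two normalized traces into a single explicit scalar by combining the resolvent identity with the matrix inversion lemma, and then to control that scalar by a crude deterministic bound valid on the almost-sure event on which $\mathbf{B}_N$ stays uniformly positive definite. Write $\mathbf{C}_N \triangleq \mathbf{B}_N + \mathbf{v}\mathbf{v}^H$, and let $A$ be the probability-one event on which there is an $\epsilon>0$ with $\lambda_1^{\mathbf{B}_N}>\epsilon$ for all large $N$. On $A$ we have $\mathbf{B}_N \succeq \epsilon\mathbf{I}\succ 0$ and $\mathbf{C}_N\succeq\mathbf{B}_N\succ 0$, so $\mathbf{B}_N^{-1}$ and $\mathbf{C}_N^{-1}$ both exist; I would fix a realization in $A$ for the rest of the argument and proceed deterministically.

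First I would apply the resolvent identity (Lemma~1) with $\mathbf{U}=\mathbf{C}_N$ and $\mathbf{V}=\mathbf{B}_N$, so that $\mathbf{B}_N^{-1}-\mathbf{C}_N^{-1} = \mathbf{C}_N^{-1}\mathbf{v}\mathbf{v}^H\mathbf{B}_N^{-1}$. Left-multiplying by $\mathbf{A}_N$, taking $\tfrac1N\mathrm{Tr}(\cdot)$ and using cyclicity gives $\tfrac1N\mathrm{Tr}(\mathbf{A}_N\mathbf{B}_N^{-1}) - \tfrac1N\mathrm{Tr}(\mathbf{A}_N\mathbf{C}_N^{-1}) = \tfrac1N\,\mathbf{v}^H\mathbf{B}_N^{-1}\mathbf{A}_N\mathbf{C}_N^{-1}\mathbf{v}$. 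Next I would invoke the matrix inversion lemma (Lemma~\ref{lemma:matrix_inversion}) with $c=1$ and $\mathbf{x}=\mathbf{v}$, and take Hermitian transposes (using that $\mathbf{B}_N$ and $\mathbf{C}_N$ are Hermitian and that $\mathbf{v}^H\mathbf{B}_N^{-1}\mathbf{v}$ is real), obtaining $\mathbf{C}_N^{-1}\mathbf{v} = \mathbf{B}_N^{-1}\mathbf{v}\,/\,(1+\mathbf{v}^H\mathbf{B}_N^{-1}\mathbf{v})$. Substituting, the difference equals the single scalar $\tfrac1N\cdot\big(\mathbf{v}^H\mathbf{B}_N^{-1}\mathbf{A}_N\mathbf{B}_N^{-1}\mathbf{v}\big)\big/\big(1+\mathbf{v}^H\mathbf{B}_N^{-1}\mathbf{v}\big)$.

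It then remains only to bound this expression. Let $K$ be a uniform bound on $\|\mathbf{A}_N\|$ and put $t\triangleq\mathbf{v}^H\mathbf{B}_N^{-1}\mathbf{v}$, which is nonnegative since $\mathbf{B}_N^{-1}\succ 0$; hence the denominator is at least $\max\{1,t\}$. Writing $\mathbf{u}\triangleq\mathbf{B}_N^{-1}\mathbf{v}$, the numerator obeys $|\mathbf{u}^H\mathbf{A}_N\mathbf{u}| \le \|\mathbf{A}_N\|\,\|\mathbf{u}\|^2 = K\,\mathbf{v}^H\mathbf{B}_N^{-2}\mathbf{v} \le K\epsilon^{-1}\,\mathbf{v}^H\mathbf{B}_N^{-1}\mathbf{v} = K\epsilon^{-1}t$, where I used $\mathbf{B}_N\succeq\epsilon\mathbf{I}\Rightarrow\mathbf{B}_N^{-2}\preceq\epsilon^{-1}\mathbf{B}_N^{-1}$. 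Therefore the whole expression is at most $\tfrac1N\cdot\tfrac{K\epsilon^{-1}t}{1+t}\le \tfrac{K}{N\epsilon}$, which tends to $0$. Since this holds on $A$ and $P(A)=1$, the stated almost-sure convergence follows.

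I do not expect a genuinely hard step here; the proof is elementary linear algebra plus a single matrix identity. The two points that need care are: (i) observing that no growth condition on $\|\mathbf{v}\|$ is needed, because the self-normalizing bounds $\mathbf{v}^H\mathbf{B}_N^{-2}\mathbf{v}\le\epsilon^{-1}\mathbf{v}^H\mathbf{B}_N^{-1}\mathbf{v}$ and $t/(1+t)\le 1$ absorb the dependence on $\mathbf{v}$ entirely; and (ii) handling the almost-sure event cleanly, i.e. fixing $\omega\in A$ so that both $\epsilon=\epsilon(\omega)>0$ and the threshold $N_0(\omega)$ past which $\lambda_1^{\mathbf{B}_N}>\epsilon$ are finite, so that the deterministic estimate $K/(N\epsilon)$ genuinely applies for all $N\ge N_0(\omega)$ and the limit is legitimate.
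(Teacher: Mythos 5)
Your proof is correct. Note, however, that the paper itself states this lemma in the appendix without any proof — it is imported wholesale from the random matrix theory literature (it is the standard rank-one perturbation lemma of Bai–Silverstein type, as used in Wagner et al.) — so there is no in-paper argument to compare against. Your derivation is essentially the canonical one: the resolvent identity plus the Sherman–Morrison step collapse the trace difference to the single scalar $\tfrac1N\,\mathbf{v}^H\mathbf{B}_N^{-1}\mathbf{A}_N\mathbf{B}_N^{-1}\mathbf{v}\big/\bigl(1+\mathbf{v}^H\mathbf{B}_N^{-1}\mathbf{v}\bigr)$, and your self-normalizing estimate $\mathbf{v}^H\mathbf{B}_N^{-2}\mathbf{v}\le\epsilon^{-1}\mathbf{v}^H\mathbf{B}_N^{-1}\mathbf{v}$ combined with $t/(1+t)\le 1$ is exactly the right device to get the uniform bound $K/(N\epsilon)$ without any growth assumption on $\|\mathbf{v}\|$ (in the literature the analogous bound is $\|\mathbf{A}\|$ over $N$ times the distance to the spectrum). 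The handling of the almost-sure event, fixing $\omega$ with its own $\epsilon(\omega)$ and $N_0(\omega)$, is also done properly, so the argument stands as a complete, self-contained justification of the lemma the paper merely cites.
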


\section{Proof of Theorem \ref{theorem:DE_SINR}} \label{app:theorem_de_sinr}
Notice that $\gamma_{inl}$ is a function of the random channel vectors $\tilde{\mathbf{h}}_{inl}$. To be more specific, $\gamma_{inl}$ is composed of the three following terms w.r.t. $\tilde{\mathbf{h}}_{inl}$: 1) the power normalization coefficient $\xi_{inl}$; 2) the scaled signal power term $ \alpha_{inl} p_{inl} \xi_{inl}^2 |\tilde{\mathbf{h}}_{inl}^{H} {\mathbf{W}}_{nm} \tilde{\mathbf{h}}_{inl} |^{2}$; 3) the scaled interference power term $\sum_{(j, q) \neq(i, l)} \alpha_{jnq} p_{jnq} \xi_{jnq}^2 |\tilde{\mathbf{h}}_{inl}^{H} {\mathbf{W}}_{nm} \tilde{\mathbf{h}}_{jnq}|^{2}$. In the following context, we will first individually derive the deterministic equivalent (D.E.) expressions for each of the three aforesaid terms. These deterministic equivalent expressions are then further combined to constitute the final deterministic equivalent expression for $\gamma_{inl}$.

\subsection{D.E. of Normalization Coefficient} \label{subsec:norm_coff}
We first show the derivation for the deterministic equivalent of $\xi_{inl}$. Plugging \eqref{eq:chanl_mdl} into \eqref{eq:rzf_coff} and using lemma \ref{lemma:matrix_inversion}, we can rewrite $\xi_{inl}^2$ as
\begin{equation}
\begin{aligned}
    \xi_{inl}^2 & = \frac{N_t}{ {\mathbf{x}}_{inl}^H\tilde{\mathbf{R}}_{il}^{1/2} \mathbf{W}_{n}^{-2} \tilde{\mathbf{R}}_{il}^{1/2}{\mathbf{x}}_{inl}} \\
    & = \frac{N_t \left(1+\alpha_{inl}{\mathbf{x}}_{inl}^H\tilde{\mathbf{R}}_{il}^{1/2}\mathbf{W}^{-1}_{n[il]} \tilde{\mathbf{R}}_{il}^{1/2}{\mathbf{x}}_{inl}\right)^2}
    { {\mathbf{x}}_{inl}^H\tilde{\mathbf{R}}_{il}^{1/2} \mathbf{W}_{n[il]}^{-2} \tilde{\mathbf{R}}_{il}^{1/2}{\mathbf{x}}_{inl}}
\end{aligned} \label{eq:equiv}
\end{equation}
where
\begin{equation}
    \mathbf{W}_{n[il]} = \sum_{(j,q) \neq (i,l)} \alpha_{jnq} \tilde{\mathbf{h}}_{jnq} \tilde{\mathbf{h}}_{jnq}^{H}+\delta \mathbf{I} \text{.} 
\end{equation}
Then lemma \ref{lemma:trace} and lemma \ref{lemma:rand_one_perturbation} are subsequently employed to deal with both the numerator and the denominator for the right-hand side of the second equation in \eqref{eq:equiv} such that
\begin{align}
    & \xi_{inl}^2-\frac{\left(N_t+\alpha_{inl}\text{Tr}\left(\tilde{\mathbf{R}}_{il}\mathbf{W}_{n[il]}^{-1}\right)\right)^{2}}{\text{Tr}\left(\tilde{\mathbf{R}}_{il}\mathbf{W}_{n[il]}^{-2}\right)} \mathop{\longrightarrow}_{N_t\to\infty} 0 \\
    \Leftrightarrow \ 
    & \xi_{inl}^2-\frac{N_t\left(1+\frac{1}{N_t}\alpha_{inl}\text{Tr}\left(\tilde{\mathbf{R}}_{il}\mathbf{W}_{n}^{-1}\right)\right)^{2}}{\frac{1}{N_t}\text{Tr}\left(\tilde{\mathbf{R}}_{il}\mathbf{W}_{n}^{-2}\right)} \mathop{\longrightarrow}_{N_t\to\infty} 0 \text{.} \label{eq:after_rank_one_perturbation}
\end{align}
By exploiting theorem \ref{theorem:DE_SINR}, we have
\begin{equation}
    \frac{1}{N_t}\text{Tr}\left(\tilde{\mathbf{R}}_{il}\mathbf{W}_{n}^{-1}\right) - e_{inl} \mathop{\longrightarrow}_{N_t\to\infty} 0  \label{eq:xi_numerator}
\end{equation}
Taking the derivative of \eqref{eq:xi_numerator} w.r.t. $-\delta$, we also obtain
\begin{equation}
    \frac{1}{N_t}\text{Tr}\left(\tilde{\mathbf{R}}_{il}\mathbf{W}_{n}^{-2}\right) - e'_{inl} \mathop{\longrightarrow}_{N_t\to\infty} 0  \label{eq:xi_denominator}
\end{equation}
Combining \eqref{eq:after_rank_one_perturbation} - \eqref{eq:xi_denominator} together, we finally derive
\begin{equation}
\xi_{inl}^2-\frac{N_{t}(1+\alpha_{inl}e_{inl})^2}{e'_{inl}} \mathop{\longrightarrow}_{{N_t\to\infty}} 0 
\end{equation}
where $e_{inl}$ and $e'_{inl}$ can be calculated through equations \eqref{eq:e}, \eqref{eq:T} and \eqref{eq:e'}, \eqref{eq:T'}, respectively. This completes the proof for the deterministic equivalent of $\xi_{inl}^2$.

\subsection{D.E. of Signal Power Term}
In this subsection we deal with the signal power term $ \alpha_{inl} p_{inl} \xi_{inl}^2 |\tilde{\mathbf{h}}_{inl}^{H} {\mathbf{W}}_{n}^{-1} \tilde{\mathbf{h}}_{inl} |^{2}$. Since the deterministic equivalent of $\xi_{inl}^2$ has been derived in the previous subsection, we focus on the derivation for the  deterministic equivalent of the term $\tilde{\mathbf{h}}_{inl}^{H} {\mathbf{W}}_{n}^{-1} \tilde{\mathbf{h}}_{inl}$ in the following contexts. 

Through lemma \ref{lemma:matrix_inversion} we have
\begin{equation}
    \tilde{\mathbf{h}}_{inl}^{H} {\mathbf{W}}_{n}^{-1}\tilde{\mathbf{h}}_{inl}
    =
    \frac{{\mathbf{x}}_{inl}^H\tilde{\mathbf{R}}_{il}^{1/2} \mathbf{W}_{n[il]}^{-1} \tilde{\mathbf{R}}_{il}^{1/2}{\mathbf{x}}_{inl}}
    {1+\alpha_{inl}{\mathbf{x}}_{inl}^H\tilde{\mathbf{R}}_{il}^{1/2}\mathbf{W}^{-1}_{n[il]} \tilde{\mathbf{R}}_{il}^{1/2}{\mathbf{x}}_{inl}}.
    \label{eq:signal_power}
\end{equation}
Similar to the proofs in subsection \ref{subsec:norm_coff}, we adopt
lemma \ref{lemma:trace} and lemma \ref{lemma:rand_one_perturbation} such that the right-hand side of \eqref{eq:signal_power} can be further expressed as 
\begin{align}
    & \tilde{\mathbf{h}}_{inl}^{H} {\mathbf{W}}_{n}^{-1} \tilde{\mathbf{h}}_{inl} - \frac{\frac{1}{N_{t}}\text{Tr}\left(\tilde{\mathbf{R}}_{il}\mathbf{W}_{n[il]}^{-1}\right)}{1+\frac{1}{N_{t}}\alpha_{inl}\text{Tr}\left(\tilde{\mathbf{R}}_{il} \mathbf{W}_{n[il]}^{-1}\right)} \mathop{\longrightarrow}_{N_t \rightarrow \infty} 0 \\
    \Leftrightarrow & \tilde{\mathbf{h}}_{inl}^{H} {\mathbf{W}}_{n}^{-1} \tilde{\mathbf{h}}_{inl} -  \frac{\frac{1}{N_{t}}\text{Tr}\left(\tilde{\mathbf{R}}_{il}
    \mathbf{W}^{-1}_{n}\right)}{1+\frac{1}{N_{t}}\alpha_{inl}\text{Tr}\left(\tilde{\mathbf{R}}_{il} \mathbf{W}^{-1}_{n}\right)} \mathop{\longrightarrow}_{N_t \rightarrow \infty} 0.
\end{align}
By further combining with theorem \ref{theorem:DE_SINR} we conclude
\begin{equation}
\tilde{\mathbf{h}}_{inl}^{H} {\mathbf{W}}_{n}^{-1} \tilde{\mathbf{h}}_{inl} -
\frac{e_{inl}}{1+\alpha_{inl}e_{inl}}\mathop{\longrightarrow}_{N_t \rightarrow \infty} 0 
\end{equation}
where $e_{inl}$ is exactly the same solution computed by \eqref{eq:e} and \eqref{eq:T} in the previous subsection. This completes the proof for the deterministic equivalent of $\tilde{\mathbf{h}}_{inl}^{H} \tilde{\mathbf{W}}_{n}^{-1} \tilde{\mathbf{h}}_{inl}$.

\subsection{D.E. of Interference Power Term}
We then turn to the deterministic equivalent of the interference term $\sum_{(j, q) \neq(i, l)} \alpha_{jnq} p_{jnq} \xi_{jnq}^2 |\tilde{\mathbf{h}}_{inl}^{H} {\mathbf{W}}_{n} \tilde{\mathbf{h}}_{jnq}|^{2}$. Note that the deterministic equivalent of $\xi_{jnq}^2$, denoted by $\bar{\xi}_{jnq}^2$, has already been obtained in \ref{subsec:norm_coff}. Therefore, in the following contexts we replace $\xi_{jnq}^2$ with $\bar{\xi}_{jnq}^2$ and treat it as constant.

We first define $\beta_{inl} \triangleq \alpha_{inl} p_{inl}\hat{\xi}_{inl}^2$ and rewrite the interference term as shown in \eqref{eq:interf_eq} - \eqref{eq:interf_coff} at the top of next page.

\newcounter{de_interf}
\begin{figure*}[!t]
\normalsize
\setcounter{de_interf}{\value{equation}}
\setcounter{equation}{50}

\begin{align}
& \sum_{(j, q) \neq(i, l)} \alpha_{jnq} p_{jnq} \bar{\xi}_{jnq}^2 \left|\tilde{\mathbf{h}}_{inl}^{H} {\mathbf{W}}_{n}^{-1} \tilde{\mathbf{h}}_{jnq} \right|^{2} 
= 
\tilde{\mathbf{h}}_{inl}^{H} {\mathbf{W}}_{n}^{-1}
\tilde{\mathbf{H}}_{n[il]}^H\mathbf{B}_{n[il]}\tilde{\mathbf{H}}_{[il]}{\mathbf{W}}_{n}^{-1}
\tilde{\mathbf{h}}_{inl} \label{eq:interf_eq} \\ 
& \tilde{\mathbf{H}}_{n[il]}\triangleq\left[\tilde{\mathbf{h}}
_{1n1},\cdots,\tilde{\mathbf{h}}_{(i-1)nl},\cdots, \tilde{\mathbf{h}}_{in(l-1)},\tilde{\mathbf{h}}_{in(l+1)},\cdots,\tilde{\mathbf{h}}_{(i+1)nl},\cdots, \tilde{\mathbf{h}}_{inl}\right]^H \label{eq:interf_H} \\
& \mathbf{B}_{n[il]}\triangleq \text{diag}\left(\beta_{1n1}, \cdots, \beta_{(i-1)nl}, \cdots, \beta_{in(l-1)}, \beta_{in(l+1)}, \cdots, \beta_{(i+1)nl}, \cdots, \beta_{inl} \right) \label{eq:interf_coff}
\end{align} 

\hrulefill

\vspace*{4pt}
\end{figure*}

By defining $\mathbf{C}_{n[il]} \triangleq \tilde{\mathbf{H}}_{n[il]}^H\mathbf{B}_{n[il]}\tilde{\mathbf{H}}_{n[il]}$, we have the following relationships:
\begin{equation}
\begin{aligned}
&\tilde{\mathbf{h}}_{inl}^{H} {\mathbf{W}}_{n}^{-1} \mathbf{C}_{n[il]}
{\mathbf{W}}_{n}^{-1}
\tilde{\mathbf{h}}_{inl}
\\=
&\frac{1}{N_{t}}{\mathbf{x}}_{inl}^H\tilde{\mathbf{R}}_{il}^{1/2}\mathbf{W}^{-1}_{n}\mathbf{C}_{n[il]} \mathbf{W}^{-1}_{n}\tilde{\mathbf{R}}_{il}^{1/2}{\mathbf{x}}_{inl}
\\=
&\frac{1}{N_{t}}\tilde{\mathbf{x}}_{inl}^H\tilde{\mathbf{R}}_{il}^{1/2}\mathbf{W}_{n[il]}^{-1}\mathbf{C}_{n[il]}\mathbf{W}^{-1}_{n}\tilde{\mathbf{R}}_{il}^{1/2}\tilde{\mathbf{x}}_{inl}
\\
+ &\frac{1}{N_{t}}{\mathbf{x}}_{inl}^H\tilde{\mathbf{R}}_{il}^
{1/2}(\mathbf{W}^{-1}_{n}-\mathbf{W}_{n[il]}^{-1})\mathbf{C}_{n[il]}\mathbf{W}^{-1}_{n}\tilde
{\mathbf{R}}_{il}^{1/2}{\mathbf{x}}_{inl}.
\end{aligned}
\end{equation}
Since $\mathbf{W}^{-1}_{n}-\mathbf{W}_{n[il]}^{-1}=-\mathbf{W}^{-1}_{n}(\mathbf{W}_{n}-\mathbf{W}_{n[il]})\mathbf{W}_{n[il]}^{-1}$ and $\mathbf{W}_{n}-\mathbf{W}_{n[il]}=\alpha_{inl}\tilde{\mathbf{R}}_{il}^{1/2}{\mathbf{x}}_{inl}{\mathbf{x}}_{inl}^H \tilde{\mathbf{R}}_{il}^{1/2}$, the above equation can be further reformulated as
\begin{equation}
\begin{aligned}
&\tilde{\mathbf{h}}_{inl}^{H} {\mathbf{W}}_{n}^{-1} \mathbf{C}_{n[il]}
{\mathbf{W}}_{n}^{-1}
\tilde{\mathbf{h}}_{inl}
\\
= & \frac{1}{N_{t}}{\mathbf{x}}_{inl}^H\mathbf{F}_{inl}{\mathbf{x}}_{inl} -  \frac{1}{N_{t}}\alpha_{inl}{\mathbf{x}}_{inl}^H
\mathbf{E}_{inl}{\mathbf{x}}_{inl}{\mathbf{x}}_{inl}^H\mathbf{F}_{inl}{\mathbf{x}}_{inl}
\end{aligned} \label{eq:de_interf_eq}
\end{equation}
where we have defined $\mathbf{E}_{inl}\triangleq \tilde{\mathbf{R}}_{il}^{1/2}\mathbf{W}^{-1}_{n}\tilde{\mathbf{R}}_{il}^{1/2}$ and $\mathbf{F}_{inl}\triangleq \tilde{\mathbf{R}}_{il}^
{1/2}\mathbf{W}_{n[il]}^{-1}{\mathbf{C}}_{n[il]}\mathbf{W}^{-1}_{n}\tilde{\mathbf{R}}_{il}^{1/2}$. Through lemma we obtain
\begin{align}
    {\mathbf{x}}_{inl}^H\mathbf{E}_{inl}{\mathbf{x}}_{inl}&-\frac{u_{inl}}{\alpha_{inl}u_{inl}+1}
    \mathop{\longrightarrow}_{N_t\to\infty} 0 
    \\
    {\mathbf{x}}_{inl}^H\mathbf{F}_{inl}{\mathbf{x}}_{inl}&-\frac{v_{inl}}{\alpha_{inl}u_{inl}+1}
    \mathop{\longrightarrow}_{N_t\to\infty} 0 
\end{align}
where we have denoted $u_{inl}=\frac{1}{N_{t}}\text{Tr}\left(\tilde{\mathbf{R}}_{il}\mathbf{W}_{n[il]}^{-1}\right)$ and $v_{inl}=\frac{1}{N_{t}}\text{Tr}\left( \tilde{\mathbf{R}}_{il}\mathbf{W}_{n[il]}^{-1}\mathbf{C}_{n[il]}\right)$. Plugging the above two equations into \eqref{eq:de_interf_eq} implies 
\begin{equation}
\tilde{\mathbf{h}}_{inl}^{H} {\mathbf{W}}_{n}^{-1} \mathbf{C}_{n[il]}
\tilde{\mathbf{W}}_{n}^{-1}
\tilde{\mathbf{h}}_{inl} - \frac{\alpha_{inl}uv-uv+v}{N_{t}(\alpha_{inl}u+1)^2} \mathop{\longrightarrow}_{N_t\to\infty} 0.
\end{equation} 
By further exploiting lemma \ref{lemma:rand_one_perturbation} to these two terms, we have 
\begin{align}
    u_{inl}-e_{inl}&
    \mathop{\longrightarrow}_{N_t\to\infty} 0
    \\
    \frac{1}{N_{t}}v_{inl}-z_{inl}&
    \mathop{\longrightarrow}_{N_t\to\infty} 0 
\end{align}
where $e_{inl}$ is the solution of \eqref{eq:e} and $z_{inl}$ is expressed by 
\begin{equation}
    z_{inl}=\frac{1}{N_{t}^2}\text{Tr}\left( \mathbf{B}_{n[il]}\tilde{\mathbf{H}}_{n[il]}\mathbf{W}_{n}^{-1}\tilde{\mathbf{R}}_{il}\mathbf{W}_{n}^{-1}\tilde{\mathbf{H}}_{n[il]}^H \right).
\end{equation}
Thus, it remains to derive the asymptotic approximation of $z_{inl}$. To achieve this, we first rewrite $z_{inl}$ as
\begin{equation}
    z_{inl}=\frac{1}{N_{t}}\sum_{(j, q) \neq(i, l)}
    \mathbf{B}_{n[jq]}
     {\mathbf{x}}_{jnq}^H \tilde{\mathbf{R}}_{jq}^{1/2}
     \mathbf{W}_{n}^{-1}\tilde{\mathbf{R}}_{il}\mathbf{W}_{n}^{-1}\tilde{\mathbf{R}}_{jq}^{1/2}{\mathbf{x}}_{jnq}
\end{equation}
By subsequently adopting lemma \ref{lemma:matrix_inversion} - \ref{lemma:rand_one_perturbation}, we have
\begin{equation}
    z_{inl}=\frac{1}{N_t}\sum_{(j, q) \neq(i, l)}
     \mathbf{B}_{n[jq]}\frac{\frac{1}{N_{t}}
     \text{Tr}\left(\tilde{\mathbf{R}}_{jq}\mathbf{W}_{n}^{-1}\tilde{\mathbf{R}}_{il}\mathbf{W}_{n}^{-1}\right)}{\left(1+\frac{1}{N_{t}}\alpha_{jnq}\text{Tr}\left(\tilde{\mathbf{R}}_{jq}\mathbf{C}_{n}^{-1}\right)\right)^2}.
\end{equation}
Note that the deterministic equivalent for the denominator of the right-hand side of the above equation can be directly derived similar to the steps in subsection \ref{subsec:norm_coff}. Therefore, we only focus on the numerator part in the following. By observing this part, we have
\begin{align}
    & \frac{1}{N_{t}}\text{Tr}\left(\tilde{\mathbf{R}}_{jq}\mathbf{W}_{n}^{-1}\tilde{\mathbf{R}}_{il}\mathbf{W}_{n}^{-1}\right) \\
     = & \frac{d}{dz}\frac{1}{N_{t}}\text{Tr}\left(\tilde{\mathbf{R}}_{jq}(\mathbf{\Gamma}_{n}+\beta \mathbf{I}_{N_{t}}-z\tilde{\mathbf{R}}_{il})^{-1}
     \right) |_{z=0} \text{.}
\end{align}
where $\boldsymbol{\Gamma}_{n} = (1/N_t) \cdot \sum_{j=1}^I \sum_{q=1}^L \alpha_{jnq} \tilde{\mathbf{h}}_{jnq}\tilde{\mathbf{h}}_{jnq}^H $.

Following Theorem \ref{theorem:de}, we have
\begin{equation}
\begin{aligned}
    & \frac{1}{N_t}\text{Tr}\left(\tilde{\mathbf{R}}_{jq}(\mathbf{\Gamma}_{n}+\beta \mathbf{I}_{N_{t}}-z\tilde{\mathbf{R}}_{il})^{-1}
     \right) \\  
    - &  \frac{1}{N_t} \text{Tr}\left(\tilde{\mathbf{R}}_{jq} \mathbf{T}_{inl}(z) \right) \mathop{\longrightarrow}_{N_t\to\infty} 0,
\end{aligned}
\end{equation}
where
\begin{equation}
    \mathbf{T}_{inl}(z) =\left( \frac{1}{N_t}\sum_{j=1}^I\sum_{q=1}^L \frac{\alpha_{jnq} \tilde{\mathbf{R}}_{jq}}{1+e_{jnq,inl}(z)}+\beta\mathbf{I} -z\tilde{\mathbf{R}}_{il} \right)^{-1} \label{eq:diff_T}
\end{equation}
with $e_{jnq,inl}(z)=(1/N_{t})\cdot \text{Tr} \left(\tilde{\mathbf{R}}_{jq}\mathbf{T}_{inl}(z)\right)$. Differentiating \eqref{eq:diff_T} w.r.t. $z$ implies
\begin{equation}
\begin{aligned}
    & \frac{d}{d z} \frac{1}{N_t}\text{Tr}\left(\tilde{\mathbf{R}}_{jq}(\mathbf{\Gamma}_{n}+\beta \mathbf{I}_{N_{t}}-z\tilde{\mathbf{R}}_{il})^{-1}
     \right) \\  
    - &  \frac{1}{N_t} \text{Tr}\left(\tilde{\mathbf{R}}_{jq} \mathbf{T}'_{inl}(z) \right) \mathop{\longrightarrow}_{N_t\to\infty} 0,
\end{aligned}
\end{equation}
where
\begin{equation}
\begin{aligned}
    & \mathbf{T}'_{inl}(z) = \frac{d}{d z} \mathbf{T}_{inl}(z) = \\
    & \mathbf{T}_{inl}(z)\left[\frac{1}{N_{t}}
    \sum_{j=1}^I\sum_{q=1}^L\frac{\alpha_{jnq}\tilde{\mathbf{R}}_{jq}e'_{jnq,inl}(z)}{(1+e_{jnq,inl}(z))^2}+\tilde{\mathbf{R}}_{il} \right]\mathbf{T}_{inl}(z) \text{.}
\end{aligned}
\end{equation}
Let $z = 0$, we get $e_{jnq}=e_{jnq,inl}(0)=\frac{1}{N_{t}}\text{Tr} \left( \tilde{\mathbf{R}}_{jq}\mathbf{T}_{n}\right)$, where $\mathbf{T}_{n}=\mathbf{T}_{inl}(0)$ as shown in \eqref{theorem:DE_SINR}, $e'_{jnq,inl}$ is the unique solution of the equations $e'_{jnq,inl} = (1/N_t)\cdot \text{Tr}( \tilde{\mathbf{R}}_{jq} \mathbf{T}'_{inl}(0) )$. Combining the above equations together, $z_{inl}$ can be explicitly expressed as
\begin{equation}
    z_{inl}=\sum_{(j, q) \neq(i, l)}
    \alpha_{jnq}p_{jnq}\frac{e'_{jnq,inl}}{e'_{jnq}}
    \text{.}
\end{equation}
Subsequently we can obtain
\begin{equation}
\begin{aligned}
    & \tilde{\mathbf{h}}_{inl}^{H} {\mathbf{W}}_{n}^{-1} \mathbf{C}_{n[il]}
    {\mathbf{W}}_{n}^{-1}
    \tilde{\mathbf{h}}_{inl} \\
    - & \frac{z_{inl}((\alpha_{inl}-1)e_{inl}+1)}{(1+\alpha_{inl}e_{inl})^2}
    \mathop{\longrightarrow}_{N_t\to\infty} 0.
\end{aligned}
\end{equation}

Putting the results in the aforementioned subsections together, we can derive the deterministic equivalent of $\gamma_{inl}$:
\begin{equation}
    \bar{\gamma}_{inl} = \frac{ N_{t}\alpha_{inl} p_{inl}
    e_{inl}^2(1+e_{inl})^2}{e'_{inl}
    \sum_{(j, q) \neq(i, l)}
    \frac{\alpha_{jnq}p_{jnq}e'_{jnq,inl}}{e'_{jnq}}+e'_{inl}\sigma_{in}^{2}(1+e_{inl})^2},
\end{equation}
which ends the proof.

\section{Proof of Theorem \ref{theorem:de_sinr_zf}} \label{app:theorem_de_sinr_zf}
Based on Assumption 3 and 4, we have
\begin{equation}
    \bar{\gamma}_{inl} 
    =\frac{ N_{t}\alpha_{inl}p_{inl}(\delta{e}_{inl})^2}
    {\frac{\delta^2 e'_{inl} \sum_{(j, q) \neq(i, l)}
    \frac{\alpha_{jnq}p_{jnq}e'_{jnq,inl}}{e'_{jnq}}}{(1+ e_{inl})^2}+\delta^2 e'_{inl}\sigma_{in}^{2}}
\end{equation}
Moreover, since $\text{lim}_{\delta\to 0}\delta^2e'_{inl}=\bar{e}_{inl}$, we can get
\begin{align}
\lim_{\delta \rightarrow 0} \bar{\gamma}_{inl} 
    =&\frac{N_{t}\alpha_{inl}p_{inl}(\bar{e}_{inl})^2}
    {\frac{\bar{e}_{inl} \sum_{(j, q) \neq(i, l)}
    \frac{\alpha_{jnq}p_{jnq}e'_{jnq,inl}}{e'_{jnq}}}{(1+ e_{inl})^2}+\bar{e}_{inl}\sigma_{in}^{2}}
    \\
 =&\frac{N_{t}\alpha_{inl}p_{inl}\bar{e}_{inl}}
    {\frac{\sum_{(j, q) \neq(i, l)}
    \alpha_{jnq}p_{jnq}\frac{e'_{jnq,inl}}{e'_{jnq}}}{(1+ e_{inl})^2}+\sigma_{in}^{2}} \\
= & \frac{N_{t}\alpha_{inl}p_{inl}\bar{e}_{inl}}{\sigma_{in}^{2}}\text{,} \label{DE_SINR2}
\end{align}
where
\begin{align}
    \bar{e}_{inl}=&\text{lim}_{\delta\to0}\frac{1}{N_{t}}\text{Tr}\left(\tilde{\mathbf{R}}_{il}\left(\frac{1}{N_t}\sum_{j,q}\frac{\alpha_{jnq}\tilde{\mathbf{R}}_{jq}}{\delta+\delta e_{jnq}}+\mathbf{I}_{N_t}\right)^{-1} \right)
    \\
    =&\frac{1}{N_{t}}\text{Tr}\tilde{\mathbf{R}}_{il}{\mathbf{T}}_n.
\end{align}

\ifCLASSOPTIONcaptionsoff
  \newpage
\fi



%
\bibliographystyle{ieeetr}
\bibliography{ref}

%




\end{document}